\newtheorem{theorem}{Theorem}
\newtheorem{corollary}[theorem]{Corollary}
\newtheorem{lemma}[theorem]{Lemma}
\newtheorem{proposition}[theorem]{Proposition}
\newenvironment{proof}[1][Proof]{\noindent\textbf{#1.} }{\ \rule{0.5em}{0.5em}}
\newtheorem{example}[theorem]{Example}
\let\oldexample\example
\renewcommand{\example}{\oldexample\normalfont}
\newcommand{\ignore}[1]{}
\begin{document}

\title{Approximate Revenue Maximization with Multiple Items\thanks{%
This version: September 2017. Previous versions: February 2012; April 2012 (%
\emph{arXiv} 1204.1846 and Center for Rationality DP-606); May 2014; March
2017. Research partially supported by a European Research Council Advanced
Investigator grant (Hart) and by an Israel Science Foundation grant (Nisan).
We thank Motty Perry and Phil Reny for introducing us to the subject and for
many helpful discussions, and the referees for their very careful reading
and useful comments. A\ presentation that covers some of this work is
available at \texttt{http://www.ma.huji.ac.il/hart/abs/2good-p.html}}}
\author{Sergiu Hart\thanks{%
The Hebrew University of Jerusalem (Federmann Center for the Study of
Rationality, Department of Economics, and Institute of Mathematics).\quad 
\emph{E-mail}: \texttt{hart@huji.ac.il} \quad \emph{Web site}: \texttt{%
http://www.ma.huji.ac.il/hart}} \and Noam Nisan\thanks{%
The Hebrew University of Jerusalem (Federmann Center for the Study of
Rationality, and School of Computer Science and Engineering), and Microsoft
Research. \emph{E-mail}: \texttt{noam@cs.huji.ac.il} \quad \emph{Web site}: 
\texttt{http://www.cs.huji.ac.il/\symbol{126}noam}}}
\maketitle

\begin{abstract}
Maximizing the revenue from selling \emph{more than one} good (or item) to a
single buyer is a notoriously difficult problem, in stark contrast to the
one-good case. For two goods, we show that simple \textquotedblleft
one-dimensional" mechanisms, such as selling the goods separately, \emph{%
guarantee at least }$73\%$\emph{\ }of the optimal revenue when the
valuations of the two goods are independent and identically distributed, and
at least $50\%$ when they are independent.

For the case of $k>2$ independent goods, we show that selling them
separately guarantees at least a $c/\log ^{2}k$ fraction of the optimal
revenue; and, for independent and identically distributed goods, we show
that selling them as one bundle guarantees at least a $c/\log k$ fraction of
the optimal revenue.

Additional results compare the revenues from the two simple mechanisms of
selling the goods separately and bundled, identify situations where bundling
is optimal, and extend the analysis to multiple buyers.
\end{abstract}

\tableofcontents

\def\@biblabel#1{#1\hfill}
\def\thebibliography#1{\section*{References}
\addcontentsline{toc}{section}{References}
\list
{}{
\labelwidth 0pt
\leftmargin 1.8em
\itemindent -1.8em
\usecounter{enumi}}
\def\newblock{\hskip .11em plus .33em minus .07em}
\sloppy\clubpenalty4000\widowpenalty4000
\sfcode`\.=1000\relax\def\baselinestretch{1}\large \normalsize}
\let\endthebibliography=\endlist%

\section{Introduction\label{s:intro}}

Suppose that a seller has one good (or \textquotedblleft item") to sell to a
single buyer whose willingness to pay (or \textquotedblleft value") for the
good is $x.$ While $x$ is known to the buyer, it is unknown to the seller,
who knows only its distribution (given by a cumulative distribution function 
$F$). If the seller offers to sell the good for a price $p$ then the
probability that the buyer will buy is $1-F(p)$, and the seller's revenue
will be $p\cdot (1-F(p))$. The seller will choose a price $p^{\ast }$ that
maximizes this expression.

This problem is the classic monopolist-pricing problem. Looking at it from
an auction point of view, one may ask whether there are mechanisms for
selling the good that yield a higher revenue. Such mechanisms could be
indirect, could offer different prices for different probabilities of
getting the good, and so on. Yet, the characterization of optimal mechanisms
of Myerson (1981) (see also Riley and Samuelson 1981 and Riley and
Zeckhauser 1983) concludes that the take-it-or-leave-it offer at the above
price $p^{\ast }$ yields the optimal revenue among \emph{all} mechanisms.
Even more, Myerson's result also applies when there are multiple buyers, in
which case $p^{\ast }$ would be the reserve price in a second-price auction.

Now suppose that the seller has two (different) goods that he wants to sell
to a single buyer. Furthermore, consider the simplest case where the buyer's
values for the two goods are independently and identically distributed
according to the distribution $F$ (\textquotedblleft i.i.d.-$F$%
\textquotedblright\ for short), and where, furthermore, his valuation is
additive: if the value of the first good is $y$ and that of the second is $z$%
, then the value of the \emph{bundle} consisting of both goods is\footnote{%
Our buyer's demand is thus \emph{not} limited to one good (as is the case in
some of the existing literature; see \textquotedblleft unit-demand" in
Section \ref{sus:intro-literature}).} $y+z$. It would seem that since the
two goods are completely independent of each other, then the best one should
be able to do is to sell each of them separately in the optimal way, and
thus extract exactly twice the revenue one would make from a single good.
Yet this turns out to be false.

\begin{example}
Consider the one-good distribution $F$ taking values $1$ and $2$, each with
probability $1/2$. Let us first look at selling a single good optimally: the
seller can either choose to price it at $1$, selling always\footnote{%
Since we maximize revenue we can assume without loss of generality that ties
are broken by the buyer in a way that maximizes the seller's revenue. This
\textquotedblleft seller-favorable" property can always be achieved by
appropriate small perturbations of the mechanism; for instance, by the
seller giving a small fixed proportional discount on all payments. See Hart
and Reny (2015a, Section 1.2, and Remark (a) after Corollary 18).} and
getting a revenue of $1$, or choose to price the good at $2$, selling it
with probability $1/2$, again obtaining an expected revenue of $1$, and so
the optimal revenue from a single good is $1$. Now consider the following
mechanism for selling both goods: bundle them together, and sell the bundle
for price $3$. The probability that the sum of the buyer's values for the
two goods is at least $3$ is $3/4$, and so the revenue is $3\cdot 3/4=2.25$%
---larger than the revenue of $2$ that is obtained by selling them
separately.
\end{example}

However, that is not always so: bundling may sometimes be worse than selling
the goods separately.

\begin{example}
Consider the one-good distribution $F$ taking values $0$ and $1$, each with
probability $1/2$. Selling the two goods separately yields a revenue of $1/2$
from each good (set the price at $1),$ and so $1$ in total, whereas the
revenue from selling the bundle is only $3/4$ (the optimal price for the
bundle is $1$).
\end{example}

In other cases neither selling separately nor bundling is optimal.

\begin{example}
Consider the one-good distribution $F$ taking the values $0,$ $1,$ and $2$,
each with probability $1/3$. The unique optimal mechanism for two such
i.i.d. goods turns out to be\footnote{%
For distributions with finite support, finding the optimal mechanism amounts
to solving a linear programming problem.} to offer to the buyer the choice
between any single good at price $2$ and the bundle of both goods at a
\textquotedblleft discount\textquotedblright\ price of $3$. This mechanism
gets a revenue of $13/9\approx 1.44,$ which is larger than the revenue of $%
4/3\approx 1.33$ obtained either from selling the two goods separately or
from selling them as a single bundle.
\end{example}

A similar situation obtains for the uniform distribution on $[0,1]$, for
which neither bundling nor selling separately is optimal (Manelli and
Vincent 2006). In still other cases the optimal mechanism is not even
deterministic and must offer lotteries for the goods. This happens for
instance in the following example, taken from Hart and Reny (2015a, Example
4).\footnote{%
Examples in which randomization increases the revenue appear in the
literature, starting with Thanassoulis (2004) in the somewhat different
setup of unit demand, and Manelli and Vincent (2006, 2007, 2012). See
Section \ref{sus:intro-literature}.}

\begin{example}
Consider the distribution taking the values $1$, $2,$ and $4$, with
probabilities $1/6$, $1/2$, and $1/3$, respectively. It turns out that the
unique optimal mechanism for two such i.i.d. goods offers the buyer a choice
of one of the following options: buying a lottery ticket that has price $1$
and gives the first good \emph{with probability }$1/2$, buying a similar
lottery ticket for good $2,$ buying the bundle of both goods for a price of $%
4$, and buying nothing (and paying nothing); indeed, any deterministic
mechanism has a strictly lower revenue.
\end{example}

Thus, it is not clear what optimal mechanisms for selling two goods look
like, and indeed characterizations of optimal mechanisms even for this
simple case are not known (see Section \ref{sus:intro-literature}). The
two-dimensional problem is extremely difficult, and the simple mechanisms
that amount to solving only one-dimensional problems---such as separate
selling and bundling---do not maximize the revenue in general.

This leads to the following question: how good are such simple mechanisms
for selling two goods? That is, how much of the optimal revenue is
guaranteed when using them? Consider a class of mechanisms $\mathcal{N}$
(such as separate selling or bundled selling) and a class of environments $%
\mathbb{X}$ (such as two independent and identically distributed goods, or $%
k $ independent goods); we then define the\textbf{\ }\emph{Guaranteed
Fraction of Optimal Revenue}\textbf{\ }(\textsc{GFOR}) as that maximal
fraction $\alpha $ between $0$ and $1$ such that for every environment in $%
\mathbb{X}$ there is a mechanism in $\mathcal{N}$ that yields a revenue of
at least the fraction $\alpha $ of the optimal revenue (it is thus the
reciprocal of the \textquotedblleft competitive ratio" often used in the
computer science literature; see Section \ref{sus:gfor} for the formal
definition and a discussion of these concepts).

We start with two independent goods, and consider selling them separately.%
\emph{\ }Our first result is:%
\renewcommand{\thetheorem}{\Alph{theorem}}
\setcounter{theorem}{0}%

\begin{theorem}
\label{th:1/2}For any two independent goods, selling each good separately at
its optimal one-good price guarantees at least $50\%$ of the optimal
revenue; i.e., 
\begin{equation*}
\text{\textsc{GFOR}}(\text{\textsc{separate}};~2\text{ independent goods}%
)\geq \frac{1}{2}.
\end{equation*}
\end{theorem}

This result applies to any distribution of values (we make no assumptions,
such as monotone hazard rate or increasing virtual values), and it holds
also for any number of buyers (see Section \ref{sus:indep-n-buyers}).

When the two goods are identically distributed, separate selling is
guaranteed to perform even better.

\begin{theorem}
\label{th:73}For any two independent and identically distributed goods,
selling each one at the one-good optimal price guarantees at least $73\%$ of
the optimal revenue; i.e., 
\begin{equation*}
\text{\textsc{GFOR}}(\text{\textsc{separate}};~2\text{ i.i.d. goods})\geq 
\frac{e}{e+1}\approx 0.73.
\end{equation*}
\end{theorem}

Thus, for two i.i.d. goods with distribution $F$, setting the price at $%
p^{\ast }$ that maximizes the one-good revenue (i.e., $p^{\ast }(1-F(p^{\ast
}))=\max_{p}p(1-F(p)))$ and allowing the buyer to buy any number of units---$%
0,1,$ or $2$ units---at price $p^{\ast }$ per unit guarantees at least $73\%$
of the optimal revenue.

We next consider the case of more than two goods. It turns out that, as the
number $k$ of goods grows, the fraction of the optimal revenue that is
obtainable from selling them separately may become arbitrarily small
(specifically, of the order of $1/\log k,$ cf. Corollary \ref{c:sep-k-er};
the reader may refer to the tables in Appendix \ref{ap:summary} that
summarize all these comparisons). Our main positive result here is:

\begin{theorem}
\label{th:srev-k}There exists a constant $c>0$ such that for any $k\geq 2$
and any $k$ independent goods, selling each good separately at its optimal
one-good price guarantees at least $c/\log ^{2}k$ of the optimal revenue;
i.e.,%
\begin{equation*}
\text{\textsc{GFOR}}(\text{\textsc{separate}};~k\text{ independent goods}%
)\geq \frac{c}{\log ^{2}k}.
\end{equation*}
\end{theorem}

Finally, we move to the other simple one-dimensional mechanism, the \emph{%
bundling} mechanism, which offers a single price for the bundle of all
goods. We first show that for general independent goods, bundling may do
much worse and yield only a $1/k$-fraction of the optimal revenue (Example %
\ref{ex:1/k}). However, when the goods are independent and \emph{identically}
\emph{distributed}, then bundling does much better. It is well known
(Armstrong 1999, Bakos and Brynjolfsson 1999) that for every fixed
distribution $F$, as the number of goods distributed independently according
to $F$ increases, the bundling mechanisms become close to being optimal (for
completeness we provide a short proof in Appendix \ref{s:k-iid}). This,
however, requires $k$ to grow as $F$ remains fixed. On the other hand, we
show that this is not true uniformly over $F$: for every large enough $k$,
there are distributions where the bundling mechanism on $k$ goods gives less
than $57\%$ of the optimal revenue (Example \ref{ex:57}). Our main result
for the bundling mechanism in the i.i.d. case is:

\begin{theorem}
\label{th:brev-k} There exists a constant $c>0$ such that for any $k\geq 2$
and any $k$ independent and identically distributed goods, selling them as
one bundle at the bundle-optimal price guarantees at least $c/\log k$ of the
optimal revenue:%
\begin{equation*}
\text{\textsc{GFOR}}(\text{\textsc{bundled}};~k\text{ i.i.d. goods})\geq 
\frac{c}{\log k}.
\end{equation*}
\end{theorem}

\renewcommand{\thetheorem}{\arabic{theorem}}%
The paper is organized as follows. Section \ref{sus:intro-literature}
presents a survey of the relevant literature, including work done following
the circulation of the early versions of this paper in 2012. In Section \ref%
{s:prelim} we present the model and the basic concepts, followed by a number
of useful preliminary results. Section \ref{s:independent} deals with the
case of two independent goods, and provides the proof of Theorem \ref{th:1/2}%
; the more complex proof of Theorem \ref{th:73} is relegated to Appendix \ref%
{s:proof-iid}. The main argument of these proofs is then extended to a
general decomposition theorem in Section \ref{s:decomposition}. Section \ref%
{s:sep-bun} studies the relations between the revenues from separate and
bundled selling (with some of the proofs and additional results in
Appendices \ref{s:ER} and \ref{s:ap-sep-bun}); these relations are not only
interesting in their own right, but are also used as part of the general
analysis, and provide us with most of the examples that we have of gaps in
revenue. The results for more than two goods, Theorems \ref{th:srev-k} and %
\ref{th:brev-k}, are then proved in Section \ref{s:proofs-thm3-4}, making
use of the decomposition of Section \ref{s:decomposition} and the
comparisons of Section \ref{s:sep-bun}. Additional relevant results, namely,
upper bounds on \textsc{GFOR}, a two-good setup where bundling is shown to
be optimal, and the extension to multiple buyers, are stated in Section \ref%
{s:additional} (with proofs relegated to Appendices \ref{s:brev=rev} and \ref%
{s:n-buyers-proofs}). In Section \ref{s:discussion} we discuss open
problems. Finally, Appendix \ref{ap:summary} provides two tables: the first
summarizes the lower and upper bounds that we have obtained on the fraction
of optimal revenue that is guaranteed for both separate and bundled selling,
and the second summarizes the comparisons between separate and bundled
selling.

\subsection{Literature\label{sus:intro-literature}}

We briefly describe some of the existing work on these issues. McAfee and
McMillan (1988) identify cases where the optimal mechanism is deterministic.
However, Thanassoulis (2004) and Manelli and Vincent (2006) found a
technical error in the paper and present counterexamples. These last two
papers contain good surveys of the work within economic theory, with more
recent analysis by Fang and Norman (2006), Jehiel, Meyer-ter-Vehn, and
Moldovanu (2007), Pycia (2006), Lev (2011), Pavlov (2011), and Hart and Reny
(2015a). In the past few years algorithmic work on these types of topics was
carried out. One line of work (e.g., Briest, Chawla, Kleinberg, and Weinberg
2015; Cai, Daskalakis, and Weinberg 2012a; Alaei, Fu, Haghpanah, Hartline,
and Malekian 2012) shows that for discrete distributions the optimal
mechanism can be found by linear programming in rather general settings.
This is certainly true in our simple setting where the direct representation
of the mechanism constraints provides a polynomial-size linear program. Thus
we emphasize that the difficulty in our case is \emph{not computational},
but is rather one of characterizing and understanding the results of the
explicit computations: this is certainly so for continuous distributions,
but also for discrete ones.\footnote{\label{ftn:conceptual-complexity}This
suggests that a notion of \emph{\textquotedblleft conceptual complexity"}
may be appropriate here. The usual computational complexity may not capture
all the difficulty of a problem, since, even after computing the precise
solution, one may not understand its structure, what it means and
represents, and how it varies with the given parameters (e.g., Hart and Reny
2015a, where it is shown that the optimal revenue may \emph{decrease }when
the buyer's valuations \emph{increase}).} Another line of work in computer
science (Chawla, Hartline, and Kleinberg 2007; Chawla, Hartline, Malec, and
Sivan 2010; Chawla, Malec, and Sivan 2010; Alaei, Fu, Haghpanah, Hartline,
and Malekian 2012; Cai, Daskalakis, and Weinberg 2012b) attempts to
approximate the optimal revenue by simple mechanisms. This was done for
various settings, especially unit-demand settings and some generalizations.%
\footnote{%
\textquotedblleft Unit demand" means that buyers are willing to buy \emph{at
most} one of the goods. The relations between the revenues in the additive
setup and those in the unit-demand setup are discussed in our paper Hart and
Nisan (2013, Appendix 1). It is thus possible that bounds obtained in the
unit-demand literature lead, in the additive setup, to weaker versions of
our Theorem \ref{th:1/2}. Our direct approach is simpler---cf. Section \ref%
{s:independent}---and generalizable---cf. Section \ref{s:decomposition}.}
One particular conclusion from this line of work is that for many subclasses
of distributions (such as those with a monotone hazard rate) various simple
mechanisms can extract a constant fraction of the expected value of the
goods.\footnote{%
In our setting this is true even more generally, for instance, whenever the
ratio between the median and the expectation is bounded (which happens in
particular when the tail of the distribution is \textquotedblleft
thinner\textquotedblright\ than $x^{-\alpha }$ for $\alpha >1$). Indeed,
posting a price equal to the median yields a revenue of one-half of the
median, and hence at least a constant fraction of the expectation (which is,
by IR, the most that the seller can extract as expected revenue).} This is
true in our simple setting, where for such distributions selling the goods
separately provides a constant fraction of the expected value and thus of
the optimal revenue. The case of multiple goods with \emph{correlated
distributions} was studied by Briest, Chawla, Kleinberg, and Weinberg (2010)
and Hart and Nisan (2013) and turns out to be quite different from the
independently distributed case: classes of simple mechanisms (even the class
of all deterministic mechanisms) may well yield only an arbitrarily small
fraction of the optimal revenue.

Since the circulation of early versions of this paper in 2012 (Hart and
Nisan 2012), there has been a flurry of work on optimal mechanisms for
multiple goods for various cases: Daskalakis, Deckelbaum, and Tzamos (2013,
2017), Giannakopoulos (2014), Giannakopoulos and Koutsoupias (2014),
Menicucci, Hurkens, and Jeon (2015), and Tang and Wang (2017). The general
problem was studied from a computational perspective in Daskalakis,
Deckelbaum, and Tzamos (2014), where it is shown to be computationally
intractable (formally, $\#P$-hard). Several developments have occurred
regarding \textsc{GFOR} for multiple goods. Li and Yao (2013) improved our
lower bound on \textsc{GFOR(separate)} for $k$ goods from $c/\log ^{2}k$ to
the tight $c/\log k$. For the case of $k$ independent and identically
distributed goods, Li and Yao (2013) proved that \textsc{GFOR(bundled)} is
bounded from below by a constant that is independent of the number of goods $%
k.$ Babaioff, Immorlica, Lucier, and Weinberg (2014) showed that, for $k$
independent (but not necessarily identically distributed) goods, \textsc{GFOR%
}$(\{$\textsc{separate},\textsc{~bundled}$\})$ is bounded from below by a
constant that is~independent of $k$ (that is, there is $c>0$ such that for
any number $k$ and any $k$ independent goods, either separate selling or
bundling yields at least the fraction $c$ of the optimal revenue). This was
generalized by Yao (2014) to the case of multiple bidders and by Rubinstein
and Weinberg (2015) to buyers with submodular (rather than just additive)
valuations for the goods. Measures quantifying how complex mechanisms need
to be in order to yield a good proportion of the optimal revenue were
studied in Hart and Nisan (2013), Dughmi, Han, and Nisan (2014), Morgenstern
and Roughgarden (2016), and Babaioff, Gonczarowski, and Nisan (2017).

\section{Preliminaries\label{s:prelim}}

In this section we present the model formally and define the concepts that
we use, followed by a number of preliminary results.

\subsection{The Model\label{sus:model}}

One seller (or \textquotedblleft monopolist") is selling a number $k\geq 1$
of goods\ (or \textquotedblleft items," \textquotedblleft objects," etc.) to
one buyer.

The goods have no value or cost to the seller. Let $x_{1},x_{2},...,x_{k}%
\geq 0$ be the buyer's values for the goods. The value for getting a set of
goods is \emph{additive}: getting the subset $I\subseteq \{1,2,...,k\}$ of
goods is worth $\sum_{i\in I}x_{i}$ to the buyer (and so, in particular, the
buyer's demand is \emph{not} restricted to one good only). The values are
given by a random variable $X=(X_{1},X_{2},...,X_{k})$ that takes values in $%
\mathbb{R}_{+,}^{k}$ (we thus assume that valuations are always
nonnegative); we will refer to $X$ as a $k$\emph{-good} \emph{random
valuation.} The realization $x=(x_{1},x_{2},...,x_{k})\in \mathbb{R}_{+}^{k}$
of $X$ is known to the buyer, but not to the seller, who knows only the
distribution $F$ of $X$ (which may be viewed as the seller's belief). The
buyer and the seller are assumed to be risk neutral and to have quasi-linear
utilities (i.e., the utility is additive with respect to monetary transfers;
e.g., getting good $1$ with probability $1/2$ and paying $8$ with
probability $1/4$ is worth $(1/2)\cdot x_{1}-(1/4)\cdot 8$ to the buyer and $%
(1/4)\cdot 8$ to the seller).

The objective is to \emph{maximize} the seller's (expected) \emph{revenue}.

As has been well established by the so-called \textquotedblleft Revelation
Principle\textquotedblright\ (starting with Myerson 1981; see for instance
the book of Krishna 2010), we can restrict ourselves to \textquotedblleft
direct mechanisms" and \textquotedblleft truthful
equilibria.\textquotedblright\ A (direct)\footnote{%
\textquotedblleft Mechanism\textquotedblright\ will henceforth always mean
\textquotedblleft direct mechanism.\textquotedblright}\emph{\ mechanism} $%
\mu $ consists of a pair of functions $(q,s),$ where $%
q=(q_{1},q_{2},...,q_{k}):\mathbb{R}_{+}^{k}\rightarrow \lbrack 0,1]^{k}$
and $s:\mathbb{R}_{+}^{k}\rightarrow \mathbb{R},$ which prescribe the \emph{%
allocation} of goods and the \emph{payment}, respectively. Specifically, if
the buyer reports a value vector $x\in \mathbb{R}_{+}^{k},$ then $%
q_{i}(x)\in \lbrack 0,1]$ is the probability that the buyer receives good%
\footnote{%
When the goods are infinitely divisible and the valuations are linear in
quantities (i.e., the value of a quantity $\lambda $ of good $i$ is $\lambda
x_{i}),$ we may interpret $q_{i}$ also as the \emph{quantity} of good $i$
that the buyer gets.} $i$ (for $i=1,2,...,k$), and $s(x)$ is the payment
that the seller receives from the buyer. When the buyer reports his value $x$
truthfully, his payoff is\footnote{%
When $y=(y_{i})_{i=1,...,n}$ and $z=(z_{i})_{i=1,...,n}$ are $n$-dimensional
vectors, $y\cdot z$ denotes their scalar product $\sum_{i=1}^{n}y_{i}z_{i}.$}
$b(x)=\sum_{i=1}^{k}q_{i}(x)x_{i}-s(x)=q(x)\cdot x-s(x),$ and the seller's
payoff is\footnote{%
In the literature the payment to the seller is called transfer, cost, price,
revenue, and so on, and is denoted by $t,c,p,...;$ this plethora of names
and notations applies to the buyer as well. We hope that using the mnemonic $%
s$ for the seller's final payoff and $b$ for the buyer's final payoff will
avoid confusion.} $s(x).$ The mechanism $\mu =(q,s)$ satisfies \emph{%
individual rationality} (\textbf{IR)} if $b(x)\geq 0$ for every $x\in 
\mathbb{R}_{+}^{k},$ and \emph{incentive compatibility} (\textbf{IC}) if $%
b(x)\geq q(\tilde{x})\cdot x-s(\tilde{x})$ for every alternative report $%
\tilde{x}\in \mathbb{R}_{+}^{k}$ of the buyer when his value is $x,$ for
every $x\in \mathbb{R}_{+}^{k}.$ Let $\mathcal{M}$ denote the class of all
IC and IR mechanisms $\mu =(q,s).$ The expected revenue from a buyer with
random valuation $X$ using a mechanism $\mu =(q,s)\in \mathcal{M}$ is%
\footnote{%
In Hart and Reny (2015a, Proposition 16) it is shown that for IC and IR
mechanisms one may assume without loss of generality that $s$ is measurable;
since $s$ is bounded from below by $s(0)$ (which follows from IC at $0),$
the expected revenue $\mathbb{E}\left[ s(X)\right] $ is well defined (but
may be infinite).} $R(\mu ;X):=\mathbb{E}\left[ s(X)\right] ,$ and the \emph{%
optimal revenue} from $X$ is, by the Revelation Principle, \textsc{Rev}$%
(X):=\sup_{\mu \in \mathcal{M}}R(\mu ;X),$ the highest revenue that can be
obtained by any IC and IR mechanism $\mu .$ The revenue can never exceed the
expected valuation of all goods together: \textsc{Rev}$(X)\leq \mathbb{E}%
\left[ \sum_{i}X_{i}\right] $ (since $s(x)\leq q(x)\cdot x\leq \sum_{i}x_{i}$
by IR and $x\geq 0$).

When there is only one good, i.e., when $k=1,$ Myerson's (1981) result is
that 
\begin{equation}
\text{\textsc{Rev}}(X)=\sup_{p\geq 0}p\cdot \mathbb{P}\left[ X\geq p\right]
=\sup_{p\geq 0}p\cdot \mathbb{P}\left[ X>p\right] =\sup_{p\geq 0}p\cdot
(1-F(p)),  \label{eq:one good}
\end{equation}%
where $F$ is the cumulative distribution function of $X.$ Optimal mechanisms
correspond to the seller \textquotedblleft posting" a price $p$ and the
buyer buying the good for the price $p$ whenever his value is at least $p$;
in other words, the seller makes the buyer a \textquotedblleft
take-it-or-leave-it" offer to buy the good at price $p.$

Besides the maximal revenue, we are also interested in what can be obtained
from certain classes of mechanisms. Thus, given a class $\mathcal{N}\subset 
\mathcal{M}$ of IC and IR mechanisms$,$ let $\mathcal{N}$-\textsc{Rev}$%
(X):=\sup_{\nu \in \mathcal{N}}R(\nu ;X)$ be the maximal revenue that can be
extracted from a buyer with random valuation $X$ when restricted to
mechanisms $\nu $ in the class $\mathcal{N}.$ Some classes of mechanisms are:

\begin{itemize}
\item \textsc{Separate}: Each good $i$ is sold separately. The maximal
revenue from separate mechanisms is denoted by \textsc{SRev}$,$ and so 
\begin{equation*}
\text{\textsc{SRev}}(X):=\text{\textsc{Rev}}(X_{1})+\text{\textsc{Rev}}%
(X_{2})+...+\text{\textsc{Rev}}(X_{k}).
\end{equation*}

\item \textsc{Bundled}: All goods are sold together in one \textquotedblleft
bundle." The maximal revenue from bundled mechanisms is denoted by \textsc{%
BRev}, and so 
\begin{equation*}
\text{\textsc{BRev}}(X):=\text{\textsc{Rev}}(X_{1}+X_{2}+...+X_{k}).
\end{equation*}

\item \textsc{Deterministic:} Each good $i$ is either fully allocated or not
at all, i.e., $q_{i}(x)\in \{0,1\}$ (rather than $q_{i}(x)\in \lbrack 0,1])$
for every $x\in \mathbb{R}_{+}^{k}$ and $1\leq i\leq k.$ The maximal revenue
from deterministic mechanisms is denoted by \textsc{DRev}.
\end{itemize}

\noindent The separate and the bundled revenues are obtained by solving
one-dimensional problems (where one uses (\ref{eq:one good})), whereas the
deterministic revenue is a multi-dimensional problem. Examples where these
classes yield revenues that are smaller than the maximal revenue are well
known (see also the examples in the Introduction, and the references in
Section \ref{sus:intro-literature}).

The following is a useful characterization of incentive compatibility that
is well known (starting with Rochet 1985).

\begin{proposition}
\label{p:IC =b-convex}Let $\mu =(q,s)$ be a mechanism for $k$ goods with
buyer payoff function $b.$ Then $\mu =(q,s)$ satisfies IC if and only if $b$
is a convex function and for all $x$ the vector $q(x)$ is a subgradient of $%
b $ at $x$ (i.e., $b(\tilde{x})-b(x)\geq q(x)\cdot (\tilde{x}-x)$ for all $%
\tilde{x})$.
\end{proposition}

\begin{proof}
$\mu $ is IC if and only if $b(x)=q(x)\cdot x-s(x)=\max_{\tilde{x}\in 
\mathbb{R}_{+}^{k}}(q(\tilde{x})\cdot x-s(\tilde{x}))$ for every $x,$ which
implies that $b$ is a convex function of $x$ (as the maximum of a collection
of affine functions of $x).$ Moreover, for every $x$ and $\tilde{x}$ we have 
$b(\tilde{x})-b(x)-q(x)\cdot (\tilde{x}-x)=b(\tilde{x})-(q(x)\cdot \tilde{x}%
-s(x)),$ and so the subgradient inequalities are precisely the IC
inequalities.
\end{proof}

\bigskip

Thus $s(x)=q(x)\cdot x-b(x)=\nabla b(x)\cdot x-b(x),$ where $\nabla b(x)$
stands for a (sub)gradient of $b$ at $x,$ and so the revenue can be
expressed in terms of the buyer payoff function\footnote{%
The function $b,$ being convex, is differentiable almost everywhere, and so $%
\nabla b(x)$ is the gradient $(\partial b(x)/\partial x_{i})_{i=1,...,k}$
for almost every $x.$ As pointed out in Hart and Reny (2015a, Appendix A.1),
when maximizing revenue one may use \textquotedblleft seller-favorable"
mechanisms and replace the term $\nabla b(x)\cdot x$ with $b^{\prime }(x;x),$
the directional derivative of $b$ at $x$ in the direction $x,$ which is well
defined for every $x.$} $b.$ We also note that there is no loss of
generality in assuming that the mechanism $\mu $ is defined and satisfies IC
and IR on the whole space $\mathbb{R}_{+}^{k},$ rather than just on some
domain $D\subset \mathbb{R}_{+}^{k},$ such as the set of possible values of $%
X;$ see Hart and Reny (2015a, Appendix A.1).

We conclude with a useful property: a mechanism $\mu =(q,s)$ satisfies the 
\emph{no positive transfer}\footnote{%
The \textquotedblleft transfer" is from the seller to the buyer, i.e., $%
-s(x).$} (\textbf{NPT}) property if $s(x)\geq 0$ for every $x\in \mathbb{R}%
_{+}^{k}$. Proposition \ref{p:sub-dom} below shows that NPT can always be
assumed without loss of generality when maximizing revenue.\footnote{%
This is not true in more general setups; for instance, when there are
multiple buyers that are correlated, Bayesian Nash implementation may
require using positive transfers, i.e., $s(x)<0$ (cf. Cr\'{e}mer and McLean
1988; see also Appendix \ref{s:n-buyers-proofs} below).} Moreover, the
revenue from any IC and IR mechanism on a \emph{subdomain} of valuations
cannot exceed the overall maximal revenue---even when the mechanism does not
satisfy NPT.\footnote{%
This is used in our proofs, e.g., in Section \ref{s:independent}, where we
construct mechanisms for which $s$ may take negative values.}

\begin{proposition}
\label{p:sub-dom}Let $\mu =(q,s)$ be an IC and IR mechanism, and let $X$ be
a $k$-good random valuation in $\mathbb{R}_{+}^{k}$, where $k\geq 1.$ Then:

\begin{description}
\item[(i)] $\mu $ satisfies NPT if and only if $s(0)=0,$ which occurs if and
only if $b(0)=0.$

\item[(ii)] There is a mechanism $\hat{\mu}=(q,\hat{s})$ with the same $q$
and with $\hat{s}(x)\geq s(x)$ for all $x\in \mathbb{R}_{+}^{k},$ such that $%
\hat{\mu}$ satisfies IC, IR, and NPT.

\item[(iii)] \textsc{Rev}$(X)=\sup_{\mu }R(\mu ;X)$ where the supremum is
taken over all IC, IR, and NPT mechanisms $\mu $.

\item[(iv)] Let $A\subseteq \mathbb{R}_{+}^{k}$ be a set of values of $X;$
then\footnote{%
We write $\mathbf{1}_{W}$ for the indicator of the event $W$: it takes the
value $1$ when $W$ occurs and the value $0$ otherwise.} 
\begin{equation*}
\mathbb{E}\left[ s(X)\,\mathbf{1}_{X\in A}\right] \leq \text{\textsc{Rev}}%
(X\,\mathbf{1}_{X\in A})\leq \text{\textsc{Rev}}(X).
\end{equation*}
\end{description}
\end{proposition}

\begin{proof}
\textbf{(i)} IC at $0$ yields $s(x)\geq s(0)$ for all $x,$ and IR at $0$
yields $s(0)\leq 0;$ the minimal payment is thus $s(0),$ which cannot be
positive. Therefore, $s(x)\geq 0$ for all $x$ if and only if $s(0)=0.$ Now $%
s(0)+b(0)=q(0)\cdot 0=0,$ and so $s(0)=0$ if and only if $b(0)=0$.

\textbf{(ii)} Put $\hat{s}(x):=s(x)-s(0)\geq s(x)$ for all $x$ (recall that $%
s(0)\leq 0$ by IR at $0).$ Then $\hat{\mu}=(q,\hat{s})$ satisfies IC since
the payment differences have not changed (i.e., $\hat{s}(x)-\hat{s}(\tilde{x}%
)=s(x)-s(\tilde{x})$ for all $x,\tilde{x})$; it satisfies IR since $%
q(x)\cdot x-s(x)\geq q(0)\cdot x-s(0)\geq -s(0)$ (by IC); and it satisfies
NPT since $\hat{s}(0)=0.$

\textbf{(iii)} Follows from (ii) since $\hat{\mu}$ yields at least as much
revenue as $\mu $ (because $\hat{s}(x)\geq s(x)$ for all $x).$

\textbf{(iv)} For the first inequality, use (ii) to get $\mathbb{E}\left[
s(X)\,\mathbf{1}_{X\in A}\right] \leq \mathbb{E}\left[ \hat{s}(X)\,\mathbf{1}%
_{X\in A}\right] =\mathbb{E}\left[ \hat{s}(X\,\mathbf{1}_{X\in A})\right]
\leq $\textsc{Rev}$(X\,\mathbf{1}_{X\in A})$ (the equality since $\hat{s}%
(0)=0$ by (i)). For the second inequality, $\mathbb{E}\left[ s(X\,\mathbf{1}%
_{X\in A})\right] =\mathbb{E}\left[ s(X)\,\mathbf{1}_{X\in A}\right] \leq 
\mathbb{E}\left[ s(X)\right] $ for any $\mu $ that satisfies NPT; apply
(iii).
\end{proof}

\subsection{Guaranteed Fraction of Optimal Revenue (GFOR)\label{sus:gfor}}

Let $\mathbb{X}$ be a class of random valuations (such as two independent
goods, or $k$ i.i.d. goods; formally, it is a class of random variables $X$
with values in $\mathbb{R}_{+}^{k}$ spaces), and let $\mathcal{N}$ be a
class of IC and IR mechanisms (such as separate selling, or deterministic
mechanisms; formally, $\mathcal{N}$ is a subset of the class $\mathcal{M}$
of all IC and IR mechanisms). The \emph{Guaranteed Fraction of Optimal
Revenue }(\textsc{GFOR}) for the class of random valuations $\mathbb{X}$ and
the class of mechanisms $\mathcal{N}$ is defined as the maximal fraction $%
\alpha $ such that, for any random valuation $X$ in $\mathbb{X},$ there are
mechanisms in the class $\mathcal{N}$ that yield at least the fraction $%
\alpha $ of the optimal revenue. Formally,\footnote{%
Put $0/0=1.$}%
\begin{equation*}
\text{\textsc{GFOR}}\equiv \text{\textsc{GFOR}}(\mathcal{N};\mathbb{X}%
):=\inf_{X\in \mathbb{X}}\frac{\mathcal{N}\text{-\textsc{Rev}}(X)}{\text{%
\textsc{Rev}}(X)}=\inf_{X\in \mathbb{X}}\frac{\sup_{\nu \in \mathcal{N}%
}R(\nu ;X)}{\sup_{\mu \in \mathcal{M}}R(\mu ;X)}.
\end{equation*}

Thus \textsc{GFOR\thinspace }$\geq \alpha $ if and only if for every random
valuation $X$ in $\mathbb{X}$ there is a mechanism $\nu $ in $\mathcal{N}$
such that its revenue is $R(\nu ;X)\geq \alpha \cdot $\textsc{Rev}$(X)$ (we
are ignoring here the trivial issues of \textquotedblleft
max\textquotedblright\ vs. \textquotedblleft sup\textquotedblright ), and 
\textsc{GFOR\thinspace }$\leq \alpha $ if there exists a random valuation $X$
in $\mathbb{X}$ such that for every mechanism $\nu $ in $\mathcal{N}$ its
revenue is $R(\nu ;X)\leq \alpha \cdot $\textsc{Rev}$(X).$

\bigskip

\noindent \textbf{Remarks.} \emph{(a) }One may argue that there is no need
for results that are \emph{uniform} with respect to the values'
distributions, on the grounds that the seller knows that distribution.
However, in the case of multiple goods, knowing the distribution does not
help find the optimal mechanism (even for simple distributions), whereas
simple mechanisms, such as separate selling, are always easy to compute (as
they use only optimal prices for one-dimensional distributions). It is thus
important to know how far from optimal these mechanisms are guaranteed to
be, particularly when one does not know what that optimum is or how to find
it.

\emph{(b) Ratios.} Why are we considering ratios? The reason is that the
revenue is covariant with rescalings, but not with translations. Indeed, 
\textsc{Rev}$(\lambda X)=\lambda \cdot $\textsc{Rev}$(X)$ for any $\lambda
>0,$ but \textsc{Rev}$(X+c)$ is in general different from \textsc{Rev}$(X)+c$
for constant $c>0$ (this happens already in the one-good case; see (\ref%
{eq:one good})).

\emph{(c) Competitive ratio}. The computer science literature uses the
concepts of \textquotedblleft competitive ratio\textquotedblright\ and
\textquotedblleft approximation ratio," which are just the reciprocal $1/$%
\textsc{GFOR }of\textsc{\ GFOR}. While the two notions are clearly
equivalent, using the optimal revenue as the benchmark (i.e., $100\%$) and
measuring everything relative to this basis---as \textsc{GFOR} does---seems
to come more naturally.

\section{Two Independent Goods\label{s:independent}}

We start by proving our first result, Theorem \ref{th:1/2}, stated in the
Introduction. Its proof forms the basis of more complex proofs
later---including Theorem \ref{th:73}, whose significantly more intricate
proof is relegated to Appendix \ref{s:proof-iid}. Theorem \ref{th:1/2} can
be restated as follows: for every two-good random valuation $X=(Y,Z)$\ with $%
Y,Z$ independent\ goods (i.e., one-dimensional nonnegative random variables),%
\begin{equation}
\text{\textsc{Rev}}(X)\leq 2\cdot \text{\textsc{SRev}}(X)=2(\text{\textsc{Rev%
}}(Y)+\text{\textsc{Rev}}(Z)).  \label{eq:2+2}
\end{equation}

\bigskip

\begin{proof}[Proof of Theorem \protect\ref{th:1/2}]
Let $\mu =(q,s)$ be a two-good IC, IR, and NPT mechanism (recall Proposition %
\ref{p:sub-dom} (iii)); we will prove that its revenue from $X$ satisfies $%
R(\mu ;X)\leq 2$\textsc{Rev}$(Y)+2$\textsc{Rev}$(Z)$. To do so, we split the
revenue into two parts, according to which one of $Y$ and $Z$ is higher, and
show that%
\begin{eqnarray}
\mathbb{E}\left[ s(Y,Z)\,\mathbf{1}_{Y\geq Z}\right] &\leq &2\text{\textsc{%
Rev}}(Y),\text{\ \ and}  \label{eq:triangle} \\
\mathbb{E}\left[ s(Y,Z)\,\mathbf{1}_{Z\geq Y}\right] &\leq &2\text{\textsc{%
Rev}}(Z).  \label{eq:triangle2}
\end{eqnarray}%
Since $R(\mu ;X)=\mathbb{E}\left[ s(Y,Z)\right] \leq \mathbb{E}\left[
s(Y,Z)\,\mathbf{1}_{Y\geq Z}\right] +\mathbb{E}\left[ s(Y,Z)\,\mathbf{1}%
_{Z\geq Y}\right] $ (the inequality is due to the diagonal $Y=Z$ being
counted twice; recall that $s\geq 0$ by NPT), adding (\ref{eq:triangle}) and
(\ref{eq:triangle2}) gives (\ref{eq:2+2}).

We now prove (\ref{eq:triangle}) (which then yields (\ref{eq:triangle2}) by
interchanging $Y$ and $Z$). For every fixed value $z\geq 0$ of the second
good define a mechanism $\mu ^{z}=(q^{z},s^{z})$ for the first good by
replacing the allocation of the second good with an equivalent decrease in
payment; that is, the allocation of the first good is unchanged, i.e., $%
q^{z}(y):=q_{1}(y,z),$ and the payment is $s^{z}(y):=s(y,z)-q_{2}(y,z)\cdot
z,$ for every $y\geq 0.$ The one-good mechanism $\mu ^{z}$ is IC and IR for $%
y$, since $\mu =(q,s)$ was IC and IR for $(y,z)$ (for IC, only the
constraints $(\tilde{y},z)$ vs. $(y,z)$ matter; for IR, the buyer payoff
function of $\mu ^{z}$ is $b^{z}(y)=b(y,z)$). Now $%
s(y,z)=s^{z}(y)+q_{2}(y,z)\cdot z\leq s^{z}(y)+z$ (because $z\geq 0$ and $%
q_{2}\leq 1),$ and so%
\begin{equation*}
\mathbb{E[}s(Y,Z)\,\mathbf{1}_{Y\geq Z}~|~Z=z]=\mathbb{E}\left[ s(Y,z)\,%
\mathbf{1}_{Y\geq z}\right] \leq \mathbb{E}\left[ s^{z}(Y)\,\mathbf{1}%
_{Y\geq z}\right] +\mathbb{E}\left[ z\,\mathbf{1}_{Y\geq z}\right]
\end{equation*}%
(the equality uses the independence of $Y$ and $Z).$ The first term is the
revenue from a subdomain of values of $Y,$ and so is at most its maximal
revenue \textsc{Rev}$(Y)$ by Proposition \ref{p:sub-dom} (iv).\footnote{$\mu
^{z}$ need not satisfy NPT, as $s^{z}$ may take negative values.} As for the
second term, we have 
\begin{equation}
\mathbb{E}\left[ z\,\mathbf{1}_{Y\geq z}\,\right] =z\cdot \mathbb{P}\left[
Y\geq z\right] \,\leq \text{\textsc{Rev}}(Y),  \label{eq:e(min)-le-rev(max)}
\end{equation}%
since posting a price of $z,$ and the buyer buying when $Y\geq z,$
constitutes an IC and IR mechanism for\footnote{%
We are \emph{not} using here the characterization (\ref{eq:one good}) of
optimal one-good mechanisms as posting-price mechanisms, but only the simple
fact that these mechanisms are IC and IR.} $y$. Thus 
\begin{equation*}
\mathbb{E[}s(Y,Z)\,\mathbf{1}_{Y\geq Z}~|~Z=z]\leq 2\text{\textsc{Rev}}(Y)
\end{equation*}%
holds for every value $z$ of $Z;$ taking expectation yields (\ref%
{eq:triangle}), completing the proof.
\end{proof}

\bigskip

In Appendix \ref{s:a-comments} we provide a number of observations arising
from this proof.

\section{The General Decomposition Result\label{s:decomposition}}

We generalize the decomposition of the previous section from two goods to
two \emph{sets} of goods. Let now $Y$ be a $k_{1}$-dimensional nonnegative
random variable, and $Z$ a $k_{2}$-dimensional nonnegative random variable
(with $k_{1},k_{2}\geq 1$). While we assume that the vectors $Y$ and $Z$ are 
\emph{independent}, we allow for arbitrary interdependence among the
coordinates of $Y$, and likewise for the coordinates of $Z$.

The main decomposition result is:

\begin{theorem}
\label{th:decomposition} Let $Y$ and $Z$ be multi-dimensional nonnegative
random variables. If $Y$ and $Z$ are independent then 
\begin{eqnarray}
\text{\textsc{Rev}}(Y,Z) &\leq &\text{\textsc{Rev}}(Y)+\text{\textsc{Rev}}%
(Z)+\text{\textsc{BRev}}(Y)+\text{\textsc{BRev}}(Z)  \label{eq:rev+brev} \\
&\leq &2\,(\text{\textsc{Rev}}(Y)+\text{\textsc{Rev}}(Z)).
\label{eq:rev+rev}
\end{eqnarray}
\end{theorem}

The second inequality (\ref{eq:rev+rev}) follows immediately from the first (%
\ref{eq:rev+brev}), because \textsc{BRev}$\leq $\textsc{Rev}. When $Y$ and $%
Z $ are one-dimensional, both inequalities become (\ref{eq:2+2}) of Theorem %
\ref{th:1/2}.

We start with the basic argument that uses the \textquotedblleft marginal"
mechanism on $y$ generated from a mechanism on $(y,z)$ (as in the previous
section). For a $k$-dimensional random valuation $X=(X_{1},...,X_{k}),$ we
use the notation 
\begin{equation*}
\mathrm{Val}(X):=\mathbb{E}\left[ \sum_{i=1}^{k}X_{i}\right] =\sum_{i=1}^{k}%
\mathbb{E}[X_{i}]
\end{equation*}%
for the expected total sum of values (for one-dimensional $X$ we have 
\textsc{$\mathrm{Val}$}$(X)=\mathbb{E}[X]$).

\begin{lemma}[Marginal Mechanism on Subdomain]
\label{marg-sub} Let $Y$ and $Z$ be multi-dimensional nonnegative random
variables, and let $A\subseteq \mathbb{R}^{k_{1}+k_{2}}$ be a set of values
of $(Y,Z)$. If $Y$ and $Z\,$ are independent then 
\begin{equation*}
\text{\textsc{Rev}}((Y,Z)\,\mathbf{1}_{(Y,Z)\in A})\leq \text{\textsc{Rev}}%
(Y)+\QTR{sc}{\mathrm{Val}}(Z\,\mathbf{1}_{(Y,Z\,)\in A}).
\end{equation*}
\end{lemma}

\begin{proof}
For every $z$ put $A_{z}:=\{y|(y,z)\in A\}$. Take an IC and IR mechanism $%
(q,s)$ for $(y,z)$, and fix some value of $z=(z_{1},\ldots ,z_{k_{2}})$. The
induced mechanism on the $y$ goods is IC and IR, but it also hands out
quantities of the $z$ goods. If we modify it so that instead of allocating $%
z_{j}$ with probability $q_{j}=q_{j}(y,z)$, it reduces the buyer's payment
by the amount of $q_{j}z_{j}$, we are left with an IC and IR mechanism, call
it $(q^{z},s^{z})$, for the $y$ goods. Now $s(y,z)=s^{z}(y)+%
\sum_{j}q_{j}z_{j}\leq s^{z}(y)+\sum_{j}z_{j}$, and so, conditioning on $%
Z\,=z$, 
\begin{eqnarray*}
\mathbb{E}\left[ s(Y,Z)\,\mathbf{1}_{(Y,Z)\in A}~|~Z=z\right] &=&\mathbb{E}%
\left[ s(Y,z)\,\mathbf{1}_{Y\in A_{z}}\right] \\
&\leq &\mathbb{E}\left[ s^{z}(Y)\,\mathbf{1}_{Y\in A_{z}}\right] +\mathbb{E}%
\left[ \left( \sum_{j}z_{j}\right) \mathbf{1}_{Y\in A_{z}}\right]
\end{eqnarray*}%
(the equality in the first line is because $Y$ is independent of $Z$). The
first term in the second line is bounded from above by \textsc{Rev}$(Y)$ by
Proposition \ref{p:sub-dom} (iv), and the second term is $\mathbb{E}\left[
\sum_{j}(Z_{j}\,\mathbf{1}_{(Y,Z)\in A})~|~Z=z\right] $; taking expectation
over the values $z$ of $Z$ completes the proof.
\end{proof}

\bigskip

In the case of two goods, i.e., one-dimensional $Y$ and $Z$, the set of
values $A$ for which we bound \textsc{$\mathrm{Val}$}$(Z\,\mathbf{1}%
_{(Y,Z\,)\in A})$ is the set $A=\{(y,z):y\geq z\}$.

\begin{lemma}[Smaller Value]
\label{smaller} Let $Y$ and $Z$ be one-dimensional nonnegative random
variables. If $Y$ and $Z$ are independent then 
\begin{equation*}
\mathrm{Val}(Z\,\mathbf{1}_{Y\geq Z})\leq \text{\textsc{Rev}}(Y).
\end{equation*}
\end{lemma}

\begin{proof}
For every value $z$ of $Z,$ setting the price for $Y$ at $z$ yields a
revenue of $z\cdot \mathbb{P}[Y\geq z],$ which is therefore at most \textsc{%
Rev}$(Y).$ Thus $\mathrm{Val}(Z\,\mathbf{1}_{Y\geq Z})=\mathbb{E}_{z\sim Z}[%
\mathbb{E}[Z\,\mathbf{1}_{Y\geq Z}~|~Z\,=z]]=\mathbb{E}_{z\sim Z}[z\cdot 
\mathbb{P}[Y\geq z]]\leq \mathbb{E}_{z\sim Z}[\text{\textsc{Rev}}(Y)]=$%
\textsc{Rev}$(Y).$
\end{proof}

\bigskip

In the multi-dimensional case we take $A=\{(y,z):\sum_{i}y_{i}\geq
\sum_{j}z_{j}\}$ (where $y_{i}$ and $z_{j}$ are the coordinates of $y$ and $%
z,$ respectively), and get:

\begin{lemma}[Smaller Value for Multiple Goods]
\label{multi-smaller} Let $Y$ and $Z$ be multi-dimensional nonnegative
random variables. If $Y$ and $Z$ are independent then 
\begin{equation*}
\QTR{sc}{\mathrm{Val}}(Z\,\mathbf{1}_{\sum_{i}Y_{i}\geq
\sum_{j}Z\,_{j}})\leq \text{\textsc{BRev}}(Y).
\end{equation*}
\end{lemma}

\begin{proof}
Apply Lemma \ref{smaller} to the one-dimensional random variables $%
\sum_{i}Y_{i}$ and $\sum_{j}Z_{j}$, and use \textsc{Rev}$(\sum_{i}{Y_{i}})=$%
\textsc{BRev}$(Y)$.
\end{proof}

\bigskip

We can now prove our result.

\begin{proof}[Proof of Theorem \protect\ref{th:decomposition}]
We divide the space as follows: 
\begin{equation*}
\text{\textsc{Rev}}(Y,Z)\leq \text{\textsc{Rev}}\left( (Y,Z)\mathbf{1}%
_{\sum_{i}Y_{i}\geq \sum_{j}Z\,_{j}}\right) +\text{\textsc{Rev}}\left( (Y,Z)%
\mathbf{1}_{\sum_{j}Z\,_{j}\geq \sum_{i}Y_{i}}\right)
\end{equation*}%
(the inequality by NPT; see Proposition \ref{p:sub-dom} (iii)). The first
term is at most 
\begin{equation*}
\text{\textsc{Rev}}(Y)+\text{\textsc{$\mathrm{Val}$}}\left( Z\,\mathbf{1}%
_{\sum_{i}Y_{i}\geq \sum_{j}Z\,_{j}}\right) \leq \text{\textsc{Rev}}(Y)+%
\text{\textsc{BRev}}(Y)
\end{equation*}%
by Lemmas \ref{marg-sub} and \ref{multi-smaller}. The second term is bounded
similarly.
\end{proof}

\bigskip

See Appendix \ref{s:a-comments} for some comments on possible
generalizations of this decomposition approach.

\section{Separate and Bundled Selling\label{s:sep-bun}}

In this section we compare the revenue obtainable from the two simple
mechanisms of selling the goods separately and selling them as one bundle.
These mechanisms are simple as they reduce to one-good mechanisms, for which
the Myerson (1981) characterization (\ref{eq:one good}) applies. The results
below are not only interesting in their own right, but also useful when we
make comparisons to the optimal revenue (Theorems \ref{th:srev-k} and \ref%
{th:brev-k}; cf. Section \ref{s:proofs-thm3-4} below).

One advantage of one-good mechanisms is that revenue is \emph{monotonic}
with respect to valuation: increasing the buyer's values can only increase
the seller's revenue. As natural and appealing as this may sound,
monotonicity does \emph{not} extend to the multiple good case; see Hart and
Reny (2015a).

Formally, for $X$ and $Y$ real random variables, $X$ is \emph{(first-order)
stochastically dominated} by $Y$ if for every real $p$ we have $\mathbb{P}%
\left[ X\geq p\right] \leq \mathbb{P}[Y\geq p]$; essentially,\footnote{\label%
{ft:coupling}One may indeed take $X$ and $Y$ to be defined on the same
probability space $\Omega $ and to satisfy $X\leq Y$ pointwise, i.e., $%
X(\omega )\leq Y(\omega )$ for almost every realization $\omega \in \Omega $
(this is called \textquotedblleft coupling" of $X$ and $Y$). See, e.g.,
Shaked and Shantikumar (2010, Theorem 1.A.1).} what this says is that $Y$
gets higher values than $X$. We have:

\begin{proposition}[Monotonicity for One Good]
\label{p:1-monot}Let $X$ and $Y$ be one-good random valuations. If $X$ is
stochastically dominated by $Y$ then \textsc{Rev}$(X)\leq $\textsc{Rev}$(Y).$
\end{proposition}

\begin{proof}
\textsc{Rev}$(X)=\sup_{p}p\cdot \mathbb{P}[X\geq p]\leq \sup_{p}p\cdot 
\mathbb{P}[Y\geq p]=$\textsc{Rev}$(Y)$ by (\ref{eq:one good}).
\end{proof}

\bigskip

This monotonicity property leads one to consider the highest one-good random
valuation with a given revenue. Normalizing the revenue at $1$, this is the
real random variable $V$ that takes values $V\geq 1$ with probabilities $%
\mathbb{P}\left[ V\geq p\right] =1/p$ for every $p\geq 1$. We refer to a
good with random valuation $V$ as an \emph{equal-revenue} (\textrm{$\mathtt{%
ER}$}) good, and to its distribution, i.e., $F_{V}(p)=1-1/p$ and $%
f_{V}(p)=1/p^{2}$ for $p\geq 1,$ as the equal-revenue ($\mathtt{ER}$)
distribution.\footnote{%
Also known as the Pareto distribution with index $1$ and scale $1;$
interestingly, $V$ is \textrm{$\mathtt{ER}$} if and only if $1/V$ is Uniform
on $(0,1].$} Indeed, the revenue \textsc{Rev}$(V)=1$ of an $\mathtt{ER}$
good is obtained at \emph{any} posted price $p\geq 1$ (recall (\ref{eq:one
good})).\footnote{%
Moreover, an IC and IR mechanism $\mu =(q,s)$ is optimal for $V$ if and only
if it does not sell the good for values below $1,$ i.e., $q(x)=s(x)=0$ for
all $x<1,$ and $\sup_{x\geq 1}q(x)=\lim_{x\rightarrow \infty }q(x)=1$ (but
is otherwise arbitrary for $x\geq 1).$ Also, the \textrm{$\mathtt{ER}$}
distribution is the only distribution (up to rescaling) for which the
\textquotedblleft virtual valuation\textquotedblright\ of Myerson (1981)
(used, for instance, when there are multiple buyers) vanishes everywhere in
the support: $x-(1-F(x))/f(x)=0$ for all $x\geq 1.$} Note that while the
revenue of $V$ is finite, its expected value is infinite: $\mathbb{E}\left[ V%
\right] =\int_{1}^{\infty }p\cdot (1/p^{2})\,\mathrm{d}p=\infty .$

The result (\ref{eq:one good}) for one good may now be restated as follows: 
\textsc{Rev}$(X)\leq 1$ if and only if $X$ is stochastically dominated by an 
$\mathtt{ER}$ good $V$ (indeed, \textsc{Rev}$(X)\leq 1$ if and only if $%
1-F_{X}(p)\leq 1/p=1-F_{V}(p)$ for all $p\geq 1).$ That is, the revenue from
a one-good random valuation $X$ is at most $1$ if and only if one can
increase the values of $X$ and obtain a new random valuation $V$ that is 
\textrm{$\mathtt{ER}$}-distributed.

The following proposition collects the above observation together with a
number of useful results on the \textrm{$\mathtt{ER}$} distribution; the
proofs are relegated to Appendix \ref{s:ER}. From now on we use the constant 
$w\approx 0.278$ to denote the solution of the equation\footnote{%
Thus $we^{w}=1/e,$ and so $w=W(1/e)$ where $W$ is the so-called
\textquotedblleft Lambert-$W$\textquotedblright\ function.} 
\begin{equation*}
we^{w+1}=1.
\end{equation*}

\begin{proposition}
\label{p:ER}

\begin{description}
\item[(i)] Let $X$ be a one-good random valuation, and let $r\geq 0.$ Then 
\textsc{Rev}$(X)\leq r$ if and only if $X$ is stochastically dominated by $%
rV $ where $V$ is an $\mathtt{ER}$ valuation.

\item[(ii)] Let $V_{1},V_{2},...,V_{k}$ be i.i.d.-$\mathtt{ER}$, let $%
r_{1},r_{2},...r_{k}\geq 0,$ and put $\bar{r}:=(1/k)\sum_{i=1}^{k}r_{i}$ for
the average of the $r_{i}.$ Then $\sum_{i=1}^{k}r_{i}V_{i}$ is
stochastically dominated by $\sum_{i=1}^{k}\bar{r}V_{i}.$

\item[(iii)] Let $V_{1}$ and $V_{2}$ be i.i.d.-$\mathtt{ER}$. Then 
\begin{equation*}
\text{\textsc{BRev}}(V_{1},V_{2})=2(w+1)\approx 2.56.
\end{equation*}

\item[(iv)] There exist constants $c_{1}>0$ and $c_{2}<\infty $ such that
for all $k\geq 2$ and $V_{1},V_{2},...,V_{k}$ i.i.d.-$\mathtt{ER,}$ 
\begin{equation*}
c_{1}k\log k\leq \text{\textsc{BRev}}\left( V_{1},V_{2},...,V_{k}\right)
\leq c_{2}k\log k.
\end{equation*}
\end{description}
\end{proposition}

\bigskip

\noindent \textbf{Remarks. }\emph{(a) }We will see below (Corollary \ref%
{c:rev(er,er)}) that bundling is in fact optimal for two i.i.d.-$\mathtt{ER}$
goods, and so (iii) will become \textsc{Rev}$(V_{1},V_{2})=$\textsc{BRev}$%
(V_{1},V_{2})$ $=$\textsc{Rev}$(V_{1}+V_{2})=2(w+1).$

\emph{(b)} The fact that the revenue is \emph{not} monotonic for multiple
goods (Hart and Reny 2015a) foils the following natural attempt to estimate 
\textsc{GFOR} for separate selling. For concreteness, consider two i.i.d.
goods $X_{1}$ and $X_{2},$ without loss of generality normalized so that 
\textsc{Rev}$(X_{1})=\text{\textsc{Rev}}(X_{2})=1.$ Let $V_{1},V_{2}$ be
i.i.d.-$\mathrm{\mathtt{ER.}}$ Then each $X_{i}$ is stochastically dominated
by $V_{i}$, and so $X=(X_{1},X_{2})$ is stochastically dominated by $%
V=(V_{1},V_{2})\mathrm{.}$ However, we \emph{cannot} deduce from this that 
\textsc{Rev}$(X)\leq \text{\textsc{Rev}}(V)=2(w+1)$ (see Remark (a)
above)---which would have given a better bound of $1/(w+1)\approx 0.78$, and
with a much simpler proof, for \textsc{GFOR}(\textsc{separate}) in this case
(cf. Theorem \ref{th:73} and its proof in Appendix \ref{s:proof-iid}).

\bigskip

Using $\mathtt{ER}$ goods allows us to compare the separate selling revenue
to the bundling revenue.

\begin{proposition}
\label{p:sep>bun}

\begin{description}
\item[(i)] For any two independent goods $X_{1},X_{2},$ 
\begin{equation*}
\text{\textsc{SRev}}(X_{1},X_{2})\geq \frac{1}{w+1}\text{\textsc{BRev}}%
(X_{1},X_{2})\approx 0.78\cdot \text{\textsc{BRev}}(X_{1},X_{2}).
\end{equation*}

\item[(ii)] There exists a constant $c>0$ such that for any $k\geq 2$ and
any $k$ independent goods $X_{1},X_{2},...,X_{k},$%
\begin{equation*}
\text{\textsc{SRev}}(X_{1},X_{2},...,X_{k})\geq \frac{c}{\log k}\text{%
\textsc{BRev}}(X_{1},X_{2},...,X_{k}).
\end{equation*}
\end{description}
\end{proposition}

\begin{proof}
Put $r_{i}:=$\textsc{Rev}$(X_{i})$ and $\bar{r}:=(1/k)\sum_{i}r_{i}=(1/k)$%
\textsc{SRev}$(X),$ and let $V_{1},...,V_{k}$ be $k$ i.i.d.-$\mathtt{ER}$
goods. Using Proposition \ref{p:ER} (i) and (ii): each $X_{i}$ is
stochastically dominated by $r_{i}V_{i}$, hence $\sum_{i}X_{i}$ is
stochastically dominated by\footnote{\label{ft:stoch-dom-convolution}We use
here the following fact: if $X_{i}$ is stochastically dominated by $Y_{i}$
for every $i,$ then $X_{1}+\cdots +X_{k}$ is stochastically dominated by $%
Y_{1}+\cdots +Y_{k}$ (this is immediate when all the random variables are
defined on the same probability space and $X_{i}\leq Y_{i}$ pointwise for
every $i$---cf. the coupling in footnote \ref{ft:coupling}---because then $%
\sum X_{i}\leq \sum Y_{i}$); see, e.g., Shaked and Shantikumar (2010,
Theorem 1.A.3.(b)).} $\sum_{i}r_{i}V_{i},$ which is in turn dominated by $%
\bar{r}\sum_{i}V_{i}.$ Therefore 
\begin{eqnarray*}
\text{\textsc{BRev}}(X_{1},...,X_{k}) &=&\text{\textsc{Rev}}\left(
\sum_{i=1}^{k}X_{i}\right) \leq \text{\textsc{Rev}}\left( \bar{r}%
\sum_{i=1}^{k}V_{i}\right) \\
&=&\bar{r}~\text{\textsc{Rev}}\left( \sum_{i=1}^{k}V_{i}\right) =\bar{r}~%
\text{\textsc{BRev}}(V_{1},...,V_{k})
\end{eqnarray*}%
(the inequality is by monotonicity for one good, Proposition \ref{p:1-monot}%
), and then the two results follow from Proposition \ref{p:ER} (iii) and
(iv), respectively.
\end{proof}

\bigskip

Taking the goods $X_{i}$ to be $\mathtt{ER}$ goods shows that $1/(w+1)$ and $%
c/\log k$ above are both tight (cf. Proposition \ref{p:ER} (iii) and (iv)).

We conclude with comparisons in the other direction: the bundled revenue as
a fraction of the separate revenue; see Appendix \ref{s:ap-sep-bun} for
additional results.

\begin{proposition}
\label{p:bun>sep}

\begin{description}
\item[(i)] For any $k\geq 1$ and any $k$ independent goods $%
X_{1},X_{2},...,X_{k},$ 
\begin{equation*}
\text{\textsc{BRev}}(X_{1},X_{2},...,X_{k})\geq \frac{1}{k}\text{\textsc{SRev%
}}(X_{1},X_{2},...,X_{k}).
\end{equation*}

\item[(ii)] For any $k\geq 1$ and any $k$ i.i.d. goods $%
X_{1},X_{2},...,X_{k},$ 
\begin{equation*}
\text{\textsc{BRev}}(X_{1},X_{2},...,X_{k})\geq \frac{1}{4}\text{\textsc{SRev%
}}(X_{1},X_{2},...,X_{k}).
\end{equation*}
\end{description}
\end{proposition}

\begin{proof}
\textbf{(i)} For every $i$ we have $X_{i}\leq \sum_{j}X_{j}$ and so \textsc{%
Rev}$(X_{i})\leq $\textsc{Rev}$(\sum_{j}X_{j})=$ \textsc{BRev}$%
(X_{1},...,X_{k})$; summing over $j$ yields $\sum_{j}$\textsc{Rev}$%
(X_{j})\leq k~$\textsc{BRev}$(X_{1},...,X_{k})$.

\textbf{(ii)} Let $p$ be an optimal one-good price for each $X_{i}$ and put $%
\alpha :=\mathbb{P}[X_{i}\geq p]$; thus \textsc{Rev}$(X_{i})=p\alpha $. We
separate between two cases. If $k\alpha \leq 1$ then consider setting the
bundle price at $p;$ the probability that the buyer will buy is%
\begin{eqnarray*}
\mathbb{P}\left[ \sum_{i}X_{i}\geq p\right] &\geq &\mathbb{P}\left[
X_{i}\geq p\text{ for some }i\right] =\mathbb{P}\left[ \bigcup_{i}[X_{i}\geq
p]\right] \\
&\geq &\sum_{i}\mathbb{P}\left[ X_{i}\geq p\right] -\sum_{i<j}\mathbb{P}%
\left[ X_{i}\geq p,X_{j}\geq p\right] \\
&=&k\alpha -{\binom{k}{2}}\alpha ^{2}\geq \frac{1}{2}k\alpha ,
\end{eqnarray*}%
and so the revenue will be at least $pk\alpha /2\geq k~$\textsc{Rev}$%
(X_{i})/2$. If $k\alpha \geq 1$ then consider setting the bundle price at $%
p\lfloor k\alpha \rfloor $. Since the median in the $\mathrm{Binomial}%
(k,\alpha )$ distribution is at least $\lfloor k\alpha \rfloor $, the
probability that the buyer will buy is at least $1/2$, and so the revenue
will be at least $p\lfloor k\alpha \rfloor /2\geq pk\alpha /4=k~$\textsc{Rev}%
$(X_{i})/4$.
\end{proof}

\bigskip

While the constant $1/k$ in (i) is tight, the $1/4$ in (ii) is not (we have
not attempted to optimize it); see Example \ref{ex:1/k} in Appendix \ref%
{s:ap-sep-bun} and Example \ref{ex:57} in Appendix \ref{s:k-iid}.

\section{$k$ Independent Goods\label{s:proofs-thm3-4}}

We now prove the two main results on $k\geq 2$ goods, Theorems \ref%
{th:srev-k} and \ref{th:brev-k} stated in the Introduction, using our
general decomposition result of Theorem \ref{th:decomposition}.

We start with separate selling. Viewing $2k$ goods as two sets of $k$ goods
each and using (\ref{eq:rev+rev}) one can easily get by induction that $%
\sum_{i=1}^{k}$\textsc{Rev}$(X_{i})\geq (1/k)$\textsc{Rev}$(X_{1},...,X_{k})$%
, as follows: 
\begin{eqnarray*}
\text{\textsc{Rev}}(X_{1},...,X_{2k}) &\leq &2(\text{\textsc{Rev}}%
(X_{1},...,X_{k})+\text{\textsc{Rev}}(X_{k+1},...,X_{2k})) \\
&\leq &2\left( k\sum_{i=1}^{k}\text{\textsc{Rev}}(X_{i})+k\sum_{i=k+1}^{2k}%
\text{\textsc{Rev}}(X_{i})\right) \\
&=&2k\sum_{i=1}^{2k}\text{\textsc{Rev}}(X_{i}).
\end{eqnarray*}%
However, using the stronger inequality (\ref{eq:rev+brev}), together with
the relations we have shown in the previous section between the bundling and
the separate revenues, gives us the better bound of $c/\log ^{2}k$ (instead
of $1/k$) of Theorem \ref{th:srev-k}.

\bigskip

\begin{proof}[Proof of Theorem \protect\ref{th:srev-k}]
We will first prove by induction that \textsc{Rev}$(X_{1},...,X_{k})\leq
(1/c^{\prime })\log _{2}^{2}k\sum_{i=1}^{k}$\textsc{Rev}$(X_{i})$ for every $%
k\geq 2$ that is a power of $2,$ where $c^{\prime }:=\min \{c,1/2\}>0$ with $%
c>0$ given by Proposition \ref{p:sep>bun} (ii). This inequality holds for $%
k=2$ by Theorem \ref{th:1/2} (since $c^{\prime }\leq 1/2).$ For $k\geq 4$ we
apply Theorem \ref{th:decomposition} to $Y=(X_{1},...,X_{k})$ and $%
Z=(X_{k+1},...,X_{2k}),$ to get 
\begin{eqnarray}
\text{\textsc{Rev}}(X_{1},...,X_{2k}) &\leq &\text{\textsc{Rev}}%
(X_{1},...,X_{k})+\text{\textsc{Rev}}(X_{k+1},...,X_{2k})  \notag \\
&&+\,\text{\textsc{BRev}}(X_{1},...,X_{k})+\text{\textsc{BRev}}%
(X_{k+1},...,X_{2k}).  \label{eq:k/2}
\end{eqnarray}%
First, using Proposition \ref{p:sep>bun} (ii) (and $c^{\prime }\leq c)$ on
each of the \textsc{BRev} terms shows that their sum is bounded by $%
(1/c^{\prime })\log _{2}k\sum_{i=1}^{2k}$\textsc{Rev}$(X_{i}).$ Second,
using the induction hypothesis on each of the \textsc{Rev} terms shows that
their sum is bounded by $(1/c^{\prime })\log _{2}^{2}k\sum_{i=1}^{2k}$%
\textsc{Rev}$(X_{i})$. Now $\log _{2}k+\log _{2}^{2}k\leq \log _{2}^{2}(2k)$%
, and so adding the two bounds gives the result.

Next, when $2^{m-1}<k<2^{m}$ we can \textquotedblleft pad" to $2^{m}$ goods
by adding goods that have value identically zero, and so do not contribute
anything to the revenue; this at most doubles $k$.
\end{proof}

\bigskip

In Appendix \ref{s:sep-bun} we show that bundling may, by contrast, extract
only a $1/k$ fraction of the optimal revenue. However, bundling does much
better for identically distributed goods, and in fact we have a tighter
result, Theorem \ref{th:brev-k}, with $\log k$ instead of $k$.

\bigskip

\begin{proof}[Proof of Theorem \protect\ref{th:brev-k}]
Let $X_{i}$ be i.i.d., and put $R_{k}:=$\textsc{Rev}$(X_{1},...,X_{k})$ and $%
B_{k}:=$\textsc{BRev}$(X_{1},...,X_{k}).$ We want to show that there is a
finite $c>0$ such that $R_{k}\leq c\log k~B_{k}$ for all $k\geq 2.$ If $%
k\geq 2$ is a power of $2$ we apply Theorem \ref{th:decomposition}
inductively to obtain $R_{k}\leq
2B_{k/2}+4B_{k/4}+...+(k/2)B_{2}+kB_{1}+kR_{1}.$ Each of the $\log _{2}k+1$
terms in this sum is of the form $(k/\ell )B_{\ell }=(k/\ell )$\textsc{Rev}$%
(X_{1}+...+X_{\ell }),$ and is thus bounded from above by $4B_{k}$ (apply
Proposition \ref{p:bun>sep} (ii) to $k/\ell $ i.i.d. random variables each
distributed as $X_{1}+...+X_{\ell }).$ Altogether we have $R_{k}\leq 4(\log
_{2}k+1)B_{k}.$

When $2^{m-1}<k<2^{m}$ we have $R_{k}\leq R_{2^{m}}$ and $B_{k}\geq
B_{2^{m-1}}$ (adding goods can only increase the revenue: take the optimal
mechanism for the original set of goods and extend it so that it ignores the
additional goods), and $B_{2^{m}}\leq 2(w+1)B_{2^{m-1}}\leq 2(w+1)B_{k}$
(apply Proposition \ref{p:sep>bun} (i) to the two i.i.d. random variables $%
X_{1}+...+X_{2^{m-1}}$ and $X_{2^{m-1}+1}+...+X_{2^{m}}$), which together
with the above inequality for $2^{m}$ goods yields $R_{k}\leq 8(w+1)(\log
_{2}k+2)B_{k}$.
\end{proof}

\section{Additional Results\label{s:additional}}

\subsection{Upper Bound on GFOR for Two Goods\label{sus:gfor-le-0.78}}

Our results for two goods give lower bounds on \textsc{GFOR} ($50\%$ and $%
73\%$ for selling separately two independent goods and two i.i.d. goods,
respectively). Now what about upper bounds? That is, how high can \textsc{%
GFOR(separate)} actually be? The best estimate we have is that it cannot
exceed approximately $78\%$: there exist two i.i.d. goods where selling
separately yields only that fraction of the optimal revenue (while \textsc{%
GFOR} may well be lower for independent goods than for the more restricted
i.i.d. goods, we have not found a better example---i.e., with a lower
fraction---in the former class).

\begin{proposition}
\label{p:gfor-le}In the case of two independent goods, selling separately
cannot guarantee more than $78\%$ of the optimal revenue; i.e.,%
\begin{eqnarray*}
&&\text{\textsc{GFOR}}(\text{\textsc{separate}};~2\text{ independent goods})
\\
&\leq &\text{\textsc{GFOR}}(\text{\textsc{separate}};~2\text{ i.i.d. goods}%
)\leq \frac{1}{w+1}\approx 0.78.
\end{eqnarray*}
\end{proposition}

\begin{proof}
Let $V=(V_{1},V_{2})$ with $V_{1}$ and $V_{2}$ two i.i.d.-\textrm{$\mathtt{ER%
}$} goods. The revenue from selling separately is \textsc{SRev}$(V)=\,$%
\textsc{Rev}$(V_{1})+$\textsc{Rev}$(V_{2})=2,$ whereas, as shown in the next
section (Corollary \ref{c:rev(er,er)}), the optimal revenue is \textsc{Rev}$%
(V)=2(w+1)$ (obtained by bundling).
\end{proof}

\subsection{When Bundling Is Optimal\label{sus:indep-bundled}}

Interestingly, we have identified a class of two-good i.i.d. distributions
for which bundling is optimal.

\begin{theorem}
\label{th:brev-opt} Let $F$ be a continuous one-good distribution with
values in $[a,\infty )$ for some $a>0,$ and density function $f$ that is
differentiable and satisfies%
\begin{equation}
xf^{\prime }(x)+\frac{3}{2}f(x)\leq 0  \label{eq:3/2}
\end{equation}%
for every $x>a.$ Then bundling is optimal for two i.i.d.-$F$ goods $%
X_{1},X_{2}$: 
\begin{equation*}
\text{\textsc{Rev}}(X_{1},X_{2})=\text{\textsc{BRev}}(X_{1},X_{2})=\text{%
\textsc{Rev}}(X_{1}+X_{2}).
\end{equation*}
\end{theorem}

Theorem \ref{th:brev-opt} is proved in Appendix \ref{s:brev=rev}. Condition (%
\ref{eq:3/2}) is equivalent to $\left( x^{3/2}f(x)\right) ^{\prime }\leq 0$,
i.e., $x^{3/2}f(x)$ is nonincreasing in $x$ (the support of $f$ is thus
either some finite interval $[a,b]$ or the half-line $[a,\infty )$). When $%
f(x)=cx^{-\gamma },$ (\ref{eq:3/2}) holds whenever $\gamma \geq 3/2$. In
particular, the \textrm{$\mathtt{ER}$} distribution (where $\gamma =2)$
satisfies (\ref{eq:3/2}), and so does any general Pareto distribution with
index $\alpha \geq 1/2$. Together with Proposition \ref{p:ER} (iii) we thus
get:

\begin{corollary}
\label{c:rev(er,er)} Let $V_{1},V_{2}$ be two i.i.d.-\textrm{$\mathtt{ER}$}
goods. Then 
\begin{equation*}
\text{\textsc{Rev}}(V_{1},V_{2})=\text{\textsc{BRev}}(V_{1},V_{2})=2(w+1)%
\approx 2.56.
\end{equation*}
\end{corollary}

\subsection{Multiple Buyers\label{sus:indep-n-buyers}}

Up to now we have been dealing with a single buyer, but our result for two
independent goods turns out to hold also when there are multiple buyers.
Unlike the simple decision-theoretic problem facing a single buyer, we now
have a multi-person game among the buyers. Two main notions of equilibrium
are considered: \emph{dominant strategy} equilibrium and \emph{Bayesian Nash}
equilibrium (corresponding to \textquotedblleft ex-post\textquotedblright\
and \textquotedblleft interim" implementations, respectively); see Appendix %
\ref{s:n-buyers-proofs} for details. Our result holds for both concepts.

\begin{theorem}
\label{th:gfor-n}In the case of $n$ independent buyers and two goods, if the
random valuations of the two goods are independent, then selling each good
separately using its optimal one-good mechanism guarantees at least $50\%$
of the optimal revenue: 
\begin{equation*}
\text{\textsc{GFOR}}(\text{\textsc{separate}})\geq \frac{1}{2};
\end{equation*}%
this holds when the optimal revenue is taken throughout\footnote{%
I.e., for the two goods, as well as for each good separately.} with respect
to either dominant strategy implementation or Bayesian Nash implementation.
\end{theorem}

That is, let the one-dimensional random variable $X_{i}^{j}\geq 0$ denote
the value of good $i$ to buyer $j,$ for $i=1,2$ and $j=1,...,n.$ Write $%
X^{j}=(X_{1}^{j},X_{2}^{j})\in \mathbb{R}_{+}^{2}$ for the random valuation
vector of buyer $j$ for both goods, and $X_{i}=(X_{i}^{j})_{j=1,...,n}\in 
\mathbb{R}_{+}^{n}$ for the vector of values of all buyers for good $i.$ 
\emph{Independent buyers}\ means that the random vectors $%
X^{1},X^{2},...,X^{n}$ are independent; \emph{independent goods}\ means that
the random vectors $X_{1}$ and $X_{2}$ are independent.\footnote{%
Independent buyers together with independent goods means that the $2n$
random variables $X_{i}^{j}$ are all independent.} Theorem \ref{th:gfor-n}
is proved in Appendix \ref{s:n-buyers-proofs}, which also contains the
precise notations and statements; the proof is again a generalization of the
proof of Theorem \ref{th:1/2} for one buyer (Section \ref{s:independent}).

\bigskip

\noindent \textbf{Remarks.} \emph{(a) Dependent buyers and dominant strategy
implementation. }In the dominant strategy case, our proof does not use the
independence between the buyers' random valuations; thus \textsc{GFOR}$($%
\textsc{separate\thinspace }$)\geq 1/2$ holds under dominant strategy
implementation for two independent goods and any number of buyers, whether
independent or not; see Theorem \ref{th:n-DS} in Appendix \ref%
{s:n-buyers-proofs}.

\emph{(b) Dependent buyers and Bayesian Nash implementation.} In the
Bayesian Nash case our proof does not extend when the buyers are not
independent (see Appendix \ref{s:n-buyers-proofs}). However, for this case Cr%
\'{e}mer and McLean (1988) show that, under a certain general
\textquotedblleft correlation-between-buyers" condition, the seller can
extract all the surplus from any single good: \textsc{Rev}$(X_{i})=\mathbb{E}%
\left[ \max_{1\leq j\leq n}X_{i}^{j}\right] .$ Since the most that the
seller can extract from the two goods is, by IR,\footnote{%
Indeed (see Appendix \ref{s:n-buyers-proofs} for notations), $b^{j}(x)\geq 0$
implies that $s^{j}(x)\leq q^{j}(x)\cdot x^{j},$ and so $\sum_{j}s^{j}(x)%
\leq \sum_{j}q^{j}(x)\cdot x^{j}=\sum_{i}\sum_{j}q_{i}^{j}(x)\,x_{i}^{j}\leq
\sum_{i}\max_{j}x_{i}^{j}.$} $\mathbb{E}\left[ \max_{j}X_{1}^{j}\right] +%
\mathbb{E}\left[ \max_{j}X_{2}^{j}\right] $, it follows that in this case 
\textsc{Rev}$(X_{1},X_{2})=\,$\textsc{Rev}$(X_{1})+$\textsc{Rev}$(X_{2}),$
and so \textsc{GFOR}$($\textsc{separate}$)=1.$ We do not know what \textsc{%
GFOR(separate)} is when the buyers are neither independent nor satisfy the Cr%
\'{e}mer--McLean condition.

\section{Open Problems\label{s:discussion}}

Many interesting problems remain open. As attested by the long time that has
passed since Myerson's (1981) work in the one-good case, characterizing the
optimal mechanisms in the multiple-goods case---even when there are just two
goods---is an extremely difficult problem.\footnote{%
Recall footnote \ref{ftn:conceptual-complexity} on \emph{conceptual
complexity}.} While the general problem appears very complex, one may well
be able to obtain results for certain useful classes of random valuations
and mechanisms. Following are some specific questions that arise from our
study:

\begin{enumerate}
\item Characterize distributions where separate selling is optimal (cf.
Theorem \ref{th:brev-opt} for bundling).

\item Provide bounds for \textsc{GFOR(deterministic)}, the fraction of
optimal revenue that is guaranteed by mechanisms that do not use
randomizations. In addition, characterize distributions where deterministic
mechanisms are optimal.\footnote{%
The recent work of Babaioff, Immorlica, Lucier, and Weinberg (2014) implies
in particular that \textsc{GFOR(deterministic)} is bounded from below by a
constant that is independent of the number of goods.}

\item Tighten the bounds on \textsc{GFOR(separate)}. While the gap in the
i.i.d. case ($73\%$ vs. $78\%$) is quite small, we do not know what the
right value is; also, is \textsc{GFOR} in the independent case in fact lower
than in the i.i.d. case? Is $50\%$ the right bound?

\item Evaluate \textsc{GFOR(separate) }for Bayesian Nash implementation when
there are multiple buyers that are neither independent nor satisfy the Cr%
\'{e}mer--McLean condition (see Remark (b) in Section \ref%
{sus:indep-n-buyers}).

\item Find simple mechanisms different from separate selling that can
guarantee a larger fraction of the optimal revenue.

\item Study the case of two or more goods that are not necessarily
independent (see Hart and Nisan 2013).\footnote{%
Where it is shown that no simple class of mechanisms can guarantee any
positive fraction of the optimal revenue; i.e., \textsc{GFOR}$=0.$}

\item Obtain useful ways to quantify the complexity (vs. simplicity) of
mechanisms, and analyze the tradeoffs between complexity and revenue (see
Hart and Nisan 2013 for such an approach: \textquotedblleft menu
complexity").
\end{enumerate}

\appendix

\section{Appendix}

\subsection{Proof for Two I.I.D. Goods\label{s:proof-iid}}

In this appendix we prove Theorem \ref{th:73}, stated in the Introduction,
which says that selling two i.i.d. goods separately yields at least $e/(e+1)$
of the optimal revenue. The proof follows a line of argument similar to that
of the proof of Theorem \ref{th:1/2} in Section \ref{s:independent}, but is
more intricate as we make all the estimates much tighter.

\bigskip

\begin{proof}[Proof of Theorem \protect\ref{th:73}]
Let $X=(Y,Z),$ where $Y$ and $Z$ are i.i.d. nonnegative one-dimensional
random variables, and let $r:=\,$\textsc{Rev}$(Y)=\,$\textsc{Rev}$%
(Z)=\sup_{t\geq 0}t\cdot G(t)$ be the revenue from each good separately,
where $G(t):=\mathbb{P}\left[ Y\geq t\right] $. We want to prove that 
\begin{equation*}
\text{\textsc{Rev}}(Y,Z)\leq \frac{e+1}{e}(\text{\textsc{Rev}}(Y)+\text{%
\textsc{Rev}}(Z))=2\left( 1+\frac{1}{e}\right) r.
\end{equation*}

Take a two-good IC and IR mechanism $\mu =(q,s)$ with buyer payoff function $%
b$ (i.e., $b(x)=q(x)\cdot x-s(x)$ for all $x).$ Without loss of generality
assume that it also satisfies NPT, i.e., $s(x)\geq 0$ for all $x,$ and $%
b(0,0)=s(0,0)=0$ (recall Proposition \ref{p:sub-dom} (iii))$.$ Since $X$ is
symmetric we will also assume that $\mu $ is symmetric, i.e., $%
q_{1}(y,z)=q_{2}(z,y)$ and $s(y,z)=s(z,y)$---and thus $b(y,z)=b(z,y)$---for
all $y,z\geq 0.$ Indeed, $\mu $ can be replaced by its \textquotedblleft
symmetrization\textquotedblright\ $\bar{\mu}=(\bar{q},\bar{s})$ given by $%
\bar{q}_{1}(y,z)=\bar{q}_{2}(z,y):=(q_{1}(y,z)+q_{2}(z,y))/2$ and $\bar{s}%
(y,z)=\bar{s}(z,y):=(s(y,z)+s(z,y))/2$ for all $y,z\geq 0,$ which also
satisfies IC, IR, and NPT, and yields the same revenue, $\mathbb{E}\left[ 
\bar{s}(Y,Z)\right] =\mathbb{E}\left[ s(Y,Z)\right] =\mathbb{E}\left[ s(Z,Y)%
\right] $ (because $Y,Z$ are i.i.d.).

For every $t\geq 0$ put $\Phi (t):=b(t,t)/2$ and $\varphi
(t):=q_{1}(t,t)=q_{2}(t,t);$ Proposition \ref{p:IC =b-convex} implies that $%
\Phi $ is a convex function, $\varphi (t)=\Phi ^{\prime }(t)$ almost
everywhere, and $\Phi (u)=\int_{0}^{u}\varphi (t)\,$\textrm{d}$t$ (formally,
use Corollary 24.2.1 in Rockafellar 1970 and $\Phi (0)=b(0,0)=0)$.

Consider first the region $Y\geq Z.$ For each fixed $z\geq 0$ such that $%
\mathbb{P}\left[ Y\geq z\right] >0$ define a mechanism $\mu
^{z}=(q^{z},s^{z})$ for the first good by $q^{z}(y):=q_{1}(y,z)$ and $%
s^{z}(y):=s(y,z)-q_{2}(y,z)\cdot z$ for every $y\geq 0;$ the buyer's payoff
remains the same: $b^{z}(y)=b(y,z)$. The mechanism $\mu ^{z}$ is IC and IR
for $y$, since $\mu $ is IC and IR for $(y,z)$. Let $Y^{z}$ denote the
random variable $Y$ conditional on the event $Y\geq z,$ and consider the
revenue $R(\mu ^{z};Y^{z})=\mathbb{E}\left[ s^{z}(Y^{z})\right] =\mathbb{E}%
\left[ s^{z}(Y)|Y\geq z\right] $ of $\mu ^{z}$ from $Y^{z}$. We have $%
Y^{z}\geq z,$ $q^{z}(z)=\varphi (z),$ and $s^{z}(z)=s(z,z)-q_{2}(z,z)\cdot
z=q_{1}(z,z)\cdot z-b(z,z)=z\varphi (z)-2\Phi (z),$ and so applying Lemma %
\ref{l:constrained-R1} below to $Y^{z}$ yields%
\begin{equation}
\mathbb{E}\left[ s^{z}(Y)|Y\geq z\right] \leq (1-\varphi (z))\text{\textsc{%
Rev}}(Y^{z})+z\varphi (z)-2\Phi (z).  \label{eq:X-tilda}
\end{equation}%
Since $\mathbb{P}\left[ Y^{z}\geq t\right] =\mathbb{P}\left[ Y\geq t\right] /%
\mathbb{P}\left[ Y\geq z\right] =G(t)/\mathbb{P}\left[ Y\geq z\right] $ for
all $t\geq z,$ we get from (\ref{eq:one good}) that%
\begin{equation*}
\text{\textsc{Rev}}(Y^{z})=\sup_{t\geq 0}t\cdot \mathbb{P}\left[ Y^{z}\geq t%
\right] =\sup_{t\geq z}t\cdot \frac{G(t)}{\mathbb{P}\left[ Y\geq z\right] }%
\leq \frac{\sup_{t\geq 0}t\cdot G(t)}{\mathbb{P}\left[ Y\geq z\right] }=%
\frac{r}{\mathbb{P}\left[ Y\geq z\right] }
\end{equation*}%
(recall that $r=$\textsc{Rev}$(Y)).$ Substitute this in (\ref{eq:X-tilda}),
and multiply it by $\mathbb{P}\left[ Y\geq z\right] ,$ to get%
\begin{equation*}
\mathbb{E}\left[ s^{z}(Y)\mathbf{1}_{Y\geq z}\right] \leq r(1-\varphi
(z))+(z\varphi (z)-2\Phi (z))\mathbb{P}\left[ Y\geq z\right]
\end{equation*}%
for all $z\geq 0$ (trivially including those where $\mathbb{P}\left[ Y\geq z%
\right] =0).$ Taking expectation over the values $z$ of $Z$:%
\begin{equation}
\mathbb{E}\left[ s^{Z}(Y)\mathbf{1}_{Y\geq Z}\right] \leq r(1-\,\mathbb{E}%
\left[ \varphi (Z)\right] )+\mathbb{E}\left[ (Z\varphi (Z)-2\Phi (Z))\mathbf{%
1}_{Y\geq Z}\right] .  \label{eq:e[s-hat]}
\end{equation}%
Now $s(y,z)=s^{z}(y)+q_{2}(y,z)\,z\leq
s^{z}(y)+q_{2}(y,y)\,z=s^{z}(y)+z\varphi (y)$ (use $z\geq 0$ and the
monotonicity of $q_{2}(y,z)=b_{z}(y,z)$ in $z,$ again from the convexity of $%
b),$ which together with (\ref{eq:e[s-hat]}) yields%
\begin{eqnarray*}
\mathbb{E}\left[ s(Y,Z)\mathbf{1}_{Y\geq Z}\right] &\leq &\mathbb{E}\left[
s^{Z}(Y)\mathbf{1}_{Y\geq Z}\right] +\mathbb{E}\left[ Z\varphi (Y)\mathbf{1}%
_{Y\geq Z}\right] \\
&\leq &r(1-\,\mathbb{E}\left[ \varphi (Z)\right] )+\mathbb{E}\left[
(Z\varphi (Y)+Z\varphi (Z)-2\Phi (Z))\mathbf{1}_{Y\geq Z}\right] \\
&=&r(1-\,\mathbb{E}\left[ \varphi (Z)\right] )+\mathbb{E}\left[ (\Lambda
\varphi (Y)+\Lambda \varphi (Z)-2\Phi (\Lambda ))\mathbf{1}_{Y\geq Z}\right]
\end{eqnarray*}%
where we put $\Lambda :=\min \{Y,Z\}.$

Consider next the region $Z>Y.$ Interchanging $Y$ and $Z$ and using $Z>y$
instead of $Z\geq y$ throughout gives%
\begin{equation*}
\mathbb{E}\left[ s(Y,Z)\mathbf{1}_{Z>Y}\right] \leq r(1-\,\mathbb{E}\left[
\varphi (Y)\right] )+\mathbb{E}\left[ (\Lambda \varphi (Z)+\Lambda \varphi
(Y)-2\Phi (\Lambda ))\mathbf{1}_{Z>Y}\right] .
\end{equation*}%
Adding the last two inequalities yields%
\begin{eqnarray*}
\mathbb{E}\left[ s(Y,Z)\right] &\leq &r(2-\mathbb{E}\left[ \varphi (Y)\right]
-\mathbb{E}\left[ \varphi (Z)\right] ) \\
&&+\mathbb{E}\left[ \Lambda \varphi (Y)+\Lambda \varphi (Z)-2\Phi (\Lambda )%
\right] .
\end{eqnarray*}%
Because $Y$ and $Z$ are i.i.d. we have $\mathbb{E}\left[ \varphi (Y)\right] =%
\mathbb{E}\left[ \varphi (Z)\right] $ and $\mathbb{E}\left[ \Lambda \varphi
(Y)\right] =\mathbb{E}\left[ \Lambda \varphi (Z)\right] ,$ and so%
\begin{equation}
\mathbb{E}\left[ s(Y,Z)\right] \leq 2r-2r\mathbb{E}\left[ \varphi (Y)\right]
+2\mathbb{E}\left[ W\right]  \label{eq:upper-bound}
\end{equation}%
where $W:=\Lambda \varphi (Y)-\Phi (\Lambda ).$

We want to bound (\ref{eq:upper-bound}) from above. This expression is
affine in $\varphi $ (recall that $\Phi (u)=\int_{0}^{u}\varphi (t)\,$%
\textrm{d}$t),$ which is a real nondecreasing function with values in $%
[0,1]. $ Since every such function lies in the closed convex hull of the
extreme functions $\varphi =\mathbf{1}_{[p,\infty )}$ for all\footnote{%
Put weight $\theta _{p}=\varphi ^{\prime }(p)\geq 0$ on $\mathbf{1}%
_{[p,\infty )}$ for (almost) every $p>0,$ weight $\theta _{0}=\varphi (0)$
on $\mathbf{1}_{[0,\infty )}\equiv 1\mathbf{,}$ and the remaining weight $%
\theta _{\infty }=1-\int_{0}^{\infty }\varphi ^{\prime }(p)\mathrm{d}%
p-\varphi (0)=1-\varphi (\infty )\geq 0$ on $\mathbf{1}_{[\infty ,\infty
)}\equiv 0;$ cf. Manelli and Vincent (2007, Lemma 4), where it is also shown
how the one-good result (\ref{eq:one good}) easily follows from this claim.} 
$0\leq p\leq \infty ,$ it suffices to bound (\ref{eq:upper-bound}) for these
extreme functions.

Consider such an extreme $\varphi =\mathbf{1}_{[p,\infty )}$ with $p\geq 0;$
then $\Phi (u)=\int_{0}^{u}\varphi (t)\,\mathrm{d}t=\max \{u-p,0\}.$
Substituting in the definition of $W$ yields 
\begin{equation*}
W=\left\{ 
\begin{array}{lcl}
\Lambda -(\Lambda -p)=p, &  & \text{if }Y\geq p\text{ and }Z\geq p, \\ 
Z-0=Z, &  & \text{if }Y\geq p\text{ and }Z<p, \\ 
0-0=0, &  & \text{if }Y<p.%
\end{array}%
\right.
\end{equation*}%
Thus%
\begin{eqnarray*}
\mathbb{E}\left[ W\right] &=&p\,\mathbb{P}\left[ Y\geq p\right] \,\mathbb{P}%
\left[ Z\geq p\right] +\mathbb{P}\left[ Y\geq p\right] \,\mathbb{E}\left[ Z\,%
\mathbf{1}_{Z<p}\right] \\
&=&\mathbb{P}\left[ Y\geq p\right] (\,\mathbb{E}\left[ p\,\mathbf{1}_{Z\geq
p}\right] +\mathbb{E}\left[ Z\,\mathbf{1}_{Z<p}\right] ) \\
&=&G(p)\,\mathbb{E}\left[ \min \{Z,p\}\right]
\end{eqnarray*}%
(we have used the fact that $Y$ and $Z$ are independent and $\min \{Z,p\}=p\,%
\mathbf{1}_{Z\geq p}+Z\,\mathbf{1}_{Z<p}$). Together with $\mathbb{E}\left[
\varphi (Y)\right] =\mathbb{E}\left[ \mathbf{1}_{Y\in \lbrack p,\infty )}%
\right] =\mathbb{P}\left[ Y\geq p\right] =G(p)$, (\ref{eq:upper-bound})
becomes%
\begin{equation}
\mathbb{E}\left[ s(Y,Z)\right] \leq 2r-2rG(p)+2G(p)\,\mathbb{E}\left[ \min
\{Z,p\}\right] )=2(r+\zeta (p))  \label{eq:upper-bound1}
\end{equation}%
where we put $\zeta (p):=G(p)\left( \mathbb{E}\left[ \min \{Z,p\}\right]
-r\right) $. If $p\geq r$ then%
\begin{eqnarray*}
\mathbb{E}\left[ \min \{Z,p\}\right] &=&\int_{0}^{\infty }\mathbb{P}\left[
\min \{Z,p\}\geq u\right] \,\mathrm{d}u=\int_{0}^{p}\mathbb{P}\left[ Z\geq u%
\right] \,\mathrm{d}u \\
&=&\int_{0}^{p}G(u)\,\mathrm{d}u\leq \int_{0}^{r}1\,\mathrm{d}u+\int_{r}^{p}%
\frac{r}{u}\,\mathrm{d}u=r+r\ln \left( \frac{p}{r}\right) ,
\end{eqnarray*}%
where the inequality follows from $G(u)\leq 1$ and $G(u)\leq r/u$ (because $%
r=\sup_{u\geq 0}u\cdot G(u)).$ Therefore%
\begin{equation*}
\zeta (p)\leq G(u)r\ln \left( \frac{p}{r}\right) \leq \frac{r}{p}r\ln \left( 
\frac{p}{r}\right) =r\frac{\ln q}{q},
\end{equation*}%
where $q:=p/r\geq 1.$ Since $\max_{q}(\ln q)/q=1/e$ (attained at $q=e),$ it
follows that $\zeta (p)\leq r/e$ for all $p\geq r.$ If $p\leq r$ then $\zeta
(p)\leq 0$ (since $\mathbb{E}\left[ \min \{Z,p\}\right] \leq p\leq r$), and
so altogether $\zeta (p)\leq r/e$ for all $p\geq 0.$ Therefore $\mathbb{E}%
\left[ s(Y,Z)\right] \leq 2r(1+1/e)$ (recall (\ref{eq:upper-bound1})), which
completes the proof.
\end{proof}

\bigskip

The auxiliary result that we have used is:

\begin{lemma}
\label{l:constrained-R1}Let $X$ be a one-good random valuation that takes
values $X\geq x_{0}$ for some $x_{0}\geq 0.$ Then for every IC and IR
mechanism $\mu =(q,s)$ we have 
\begin{equation}
R(\mu ;X)=\mathbb{E}\left[ s(X)\right] \leq (1-q(x_{0}))~\text{\textsc{Rev}}%
(X)+s(x_{0}).  \label{eq:q0-b0}
\end{equation}
\end{lemma}

\begin{proof}
The function $q$ is nondecreasing (because $q$ is the derivative of the
buyer payoff function $b,$ which is convex), and so $q(x)\geq q(x_{0})$ for
all $x\geq x_{0}.$

If $q(x_{0})=1$ then $q(x)=1$ for all $x\geq x_{0},$ hence $s(x)=s(x_{0})$
for all $x\geq x_{0}$ by IC; therefore $\mathbb{E}\left[ s(X)\right]
=s(x_{0})$ and (\ref{eq:q0-b0}) holds as an equality.

If $q(x_{0})<1$ then we define a new mechanism by rescaling $q$ so that it
uses the full range from $0$ to $1$ (instead of $q(x_{0})$ to $1).$
Specifically, define $\hat{\mu}=(\hat{q},\hat{s})$ by $\hat{q}%
(x):=(q(x)-q(x_{0}))/\lambda $ and $\hat{s}(x):=(s(x)-s(x_{0}))/\lambda $,
where $\lambda :=1-q(x_{0})>0.$ It is immediate to verify that $\hat{\mu}$
is an IC and IR mechanism (for IC, $[\hat{q}(x)\cdot x-\hat{s}(x)]-[\hat{q}(%
\tilde{x})\cdot x-\hat{s}(\tilde{x})]=\left( [q(x)\cdot x-s(x)]-[q(\tilde{x}%
)\cdot x-s(\tilde{x})]\right) /\lambda \geq 0;$ for IR, the resulting buyer
payoff function $\hat{b}$ satisfies $\hat{b}(x_{0})=\hat{q}(x_{0})\cdot
x_{0}-\hat{s}(x_{0})=0).$ Therefore \textsc{Rev}$(X)\geq \mathbb{E}\left[ 
\hat{s}(X)\right] =(\mathbb{E}\left[ s(X)\right] -s(x_{0}))/\lambda ;$
multiplying by $\lambda $ yields (\ref{eq:q0-b0}).
\end{proof}

\subsection{Some Comments on Decomposition\label{s:a-comments}}

We provide here a number of remarks related to the decompositions of
Theorems \ref{th:1/2}, \ref{th:73}, and \ref{th:decomposition}.

\noindent \textbf{Remarks.} \emph{(a) }In the proof of Theorem \ref{th:1/2}
in Section \ref{s:independent}: For every fixed $z,$ applying the
one-dimensional mechanism $\mu ^{z}$ to the whole range of $Y,$ rather than
to $Y\geq z,$ yields $\mathbb{E[}s(Y,z)]\leq \allowbreak $\textsc{Rev}$(Y)+z$
(recall that $s(y,z)\leq s^{z}(y)+z),$ and so, taking expectation over the
values $z$ of $Z,$ and then maximizing over the mechanisms $\mu ,$ we get%
\footnote{%
When $Y$ and $Z$ are not necessarily independent, this becomes \textsc{Rev}$%
(Y,Z)\leq \mathbb{E}[$\textsc{Rev}$(Y|Z)]+\mathbb{E}\left[ Z\right] ,$ where 
$(Y|Z)$ is the random variable $Y$ conditional on the value of $Z,$ and the
expectation is over (the values of) $Z.$} \textsc{Rev}$(Y,Z)\leq $\textsc{Rev%
}$(Y)+\mathbb{E}\left[ Z\right] $\textsc{.} Unfortunately, this inequality
does not suffice: $\mathbb{E}\left[ Z\right] $ may well be infinite, even
when \textsc{Rev}$(Z)$ is finite (as is the case, e.g., for the
Equal-Revenue ($\mathtt{ER}$) distribution (defined in Section \ref%
{s:sep-bun}). This explains the need to split the domain into the two
regions, $Y\geq Z$ and $Y\leq Z,$ which allows us to bound the resulting
expectation terms (see (\ref{eq:e(min)-le-rev(max)})).

\emph{(b) }The proof of Theorem \ref{th:1/2} also implies that \textsc{Rev}$%
(Y)+$\textsc{Rev}$(Z)\geq \mathbb{E}\left[ \min \{Y,Z\}\right] $ (take
expectation of (\ref{eq:e(min)-le-rev(max)}) over the values $z$ of $Z,$
interchange $Y$ and $Z$, and add the two resulting inequalities). Thus,
while in the single-good case one \emph{cannot} guarantee any positive
fraction of the expected value as revenue (take again the $\mathtt{ER}$
distribution, with infinite expectation and revenue $1),$ in the case of two
independent goods one can at least guarantee the expectation of the minimum
of the values of the two goods. A mechanism that yields a revenue of $%
\mathbb{E}\left[ \min \{Y,Z\}\right] $ consists of posting the random prices 
$p_{1}$ for the $y$ good and $p_{2}$ for the $z$ good, where $p_{1}$ and $%
p_{2}$ are independent random variables, $p_{1}$ is distributed like $Z,$
and $p_{2}$ is distributed like $Y$; this is a randomized separate mechanism.%
\footnote{%
The inequality \textsc{Rev}$(Y)+\text{\textsc{Rev}}(Z)\geq \mathbb{E}\left[
\min \{Y,Z\}\right] $ is tight, as it becomes an equality when $Y,Z$ are
i.i.d.-$\mathrm{ER}$ goods. It does not hold when $Y$ and $Z$ are not
independent (for an extreme case take the fully correlated case with $Y=Z$
being $\mathrm{ER});$ the correct inequality here is $\mathbb{E}\left[ \text{%
\textsc{Rev}}(Y|Z)\right] +\mathbb{E}\left[ \text{\textsc{Rev}}(Z|Y)\right]
\geq \mathbb{E}\left[ \min \{Y,Z\}\right] $. All this generalizes to any $%
k\geq 2$ independent goods, where we obtain $\sum_{i}$\textsc{Rev}$%
(X_{i})\geq \mathbb{E}\left[ (m-1)X^{(m)}\right] $ for every $m=1,2,...,k$
(of course, only $m\geq 2$ matters) with $X^{(m)}$ denoting the $m$-th order
statistic of $X_{1},...,X_{k}$ (thus $X^{(1)}=\max_{i}X_{i}$ and $%
X^{(k)}=\min_{i}X_{i}).$}

\emph{(c)} The decomposition of Section \ref{s:decomposition} holds in more
general setups than the totally additive valuation of this paper (where the
value to the buyer of the outcome $q\in \lbrack 0,1]^{k}$ is $%
\sum_{i}q_{i}x_{i}$). Indeed, consider an abstract mechanism-design problem
with a set of alternatives $A$, valued by the buyer according to a function $%
w:A\rightarrow \mathbb{R}_{+}^{k}$ (that he knows, whereas the seller knows
only that the function $w$ is drawn from a certain distribution); assume
also that results such as those in Proposition \ref{p:sub-dom} hold. If the
set of alternatives $A$ is in fact a product $A=A_{1}\times A_{2}$ with the
valuation additive between the two sets, i.e., $%
w(a_{1},a_{2})=w_{1}(a_{1})+w_{2}(a_{2})$, with $w_{1}$ distributed
according to $Y$ and $w_{2}$ according to $Z$, then Theorem \ref%
{th:decomposition} holds as stated. The proof now uses \textsc{$\mathrm{Val}$%
}$(Z)=\mathbb{E}[\sup_{a_{2}\in A_{2}}w_{2}(a_{2})]$ (which, in our case,
where $A_{2}=[0,1]^{k_{2}}$ and $w_{2}(q)=\sum_{j}q_{j}z_{j}$, is indeed 
\textsc{$\mathrm{Val}$}$(Z)=\mathbb{E}(\sum_{j}Z_{j})$ since $%
\sup_{q}w_{2}(q)=\sum_{j}z_{j}$).

\subsection{Equal Revenue ($\mathtt{ER}$) Goods\label{s:ER}}

In this appendix we prove the claims of Proposition \ref{p:ER} concerning $%
\mathtt{ER}$ goods: Lemma \ref{l:rER}, Propositions \ref{l:brev-ER2} and \ref%
{l:brev-ERk}, and Corollary \ref{c:equalizek}.

\begin{lemma}
\label{l:rER}Let $X$ be a one-good random valuation. Then \textsc{Rev}$%
(X)\leq r$ if and only if $X$ is stochastically dominated by $rV$ where $V$
is an $\mathtt{ER}$ valuation.
\end{lemma}

\begin{proof}
By (\ref{eq:one good}), \textsc{Rev}$(X)\leq r$ if and only if $\mathbb{P}%
[X\geq \nolinebreak p]\leq r/p$ for every $p\geq 0$; this inequality matters
only for $p>r,$ for which $r/p=\mathbb{P}\left[ rV\geq p\right] .$
\end{proof}

\bigskip

Next we compute the distribution of a weighted sum of two independent $%
\mathtt{ER}$ distributions.

\begin{lemma}
\label{aER+bER} Let $V_{1},V_{2}$ be i.i.d.-$\mathtt{ER}$ and let $\alpha
,\beta >0$. Then%
\begin{equation*}
\mathbb{P}\left[ \alpha V_{1}+\beta V_{2}\geq z\right] =\frac{\alpha \beta }{%
z^{2}}\ln \left( 1+\frac{z^{2}-(\alpha +\beta )z}{\alpha \beta }\right) +%
\frac{\alpha +\beta }{z}
\end{equation*}%
for $z\geq \alpha +\beta $, and $\mathbb{P}\left[ \alpha V_{1}+\beta
V_{2}\geq z\right] =1$ for $z\leq \alpha +\beta $.
\end{lemma}

\begin{proof}
Let $Z=\alpha V_{1}+\beta V_{2}.$ For $z\leq \alpha +\beta $ we have $%
\mathbb{P}\left[ Z\geq z\right] =1$ since $V_{i}\geq 1.$ For $z>\alpha
+\beta $ we get%
\begin{eqnarray*}
\mathbb{P}\left[ Z\geq z\right] &=&\int f(x)\left( 1-F\left( \frac{z-\alpha x%
}{\beta }\right) \right) \mathrm{d}x \\
&=&\int_{1}^{(z-\beta )/\alpha }\frac{1}{x^{2}}\frac{\beta }{z-\alpha x}\,%
\mathrm{d}x+\int_{(z-\beta )/\alpha }^{\infty }\frac{1}{x^{2}}1\,\mathrm{d}x
\\
&=&\frac{\beta }{z}\left[ \frac{\alpha }{z}\ln x-\frac{\alpha }{z}\ln \left( 
\frac{z}{\alpha }-x\right) -\frac{1}{x}\right] _{1}^{(z-\beta )/\alpha }+%
\frac{\alpha }{z-\beta } \\
&=&\frac{\alpha \beta }{z^{2}}\left( \ln \left( \frac{z}{\beta }-1\right)
+\ln \left( \frac{z}{\alpha }-1\right) \right) -\frac{\alpha \beta }{%
z(z-\beta )}+\frac{\beta }{z}+\frac{\alpha }{z-\beta } \\
&=&\frac{\alpha \beta }{z^{2}}\ln \left( 1+\frac{z^{2}-(\alpha +\beta )z}{%
\alpha \beta }\right) +\frac{\alpha +\beta }{z}\,,
\end{eqnarray*}%
completing the proof.
\end{proof}

\bigskip

Weighted sums of independent $\mathtt{ER}$ distributions are used in the
proof of Proposition \ref{p:sep>bun} (see Section \ref{s:sep-bun} and recall
Lemma \ref{l:rER}). What we will show now (Lemma \ref{l:ER-equalize2} and
Corollary \ref{c:equalizek}) is that moving the weights in the direction of
equalizing them yields stochastic domination.

\begin{lemma}
\label{l:ER-equalize2}Let $V_{1},V_{2}$ be i.i.d.-$\mathtt{ER}$ and let $%
\alpha ,\beta ,a^{\prime },\beta ^{\prime }>0.$ If $\alpha +\beta =\alpha
^{\prime }+\beta ^{\prime }$ and\footnote{%
Equivalently, $\alpha ^{\prime },\beta ^{\prime }$ are closer to one another
than $\alpha ,\beta $ are; i.e., $|\alpha ^{\prime }-\beta ^{\prime }|\leq
|\alpha -\beta |.$} $\alpha \beta \leq \alpha ^{\prime }\beta ^{\prime }$
then $\alpha V_{1}+\beta V_{2}$ is stochastically dominated by $\alpha
^{\prime }V_{1}+\beta ^{\prime }V_{2}.$
\end{lemma}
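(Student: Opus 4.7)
The plan is to invoke the explicit formula for $\Prb(\alpha X_1 + \beta X_2 \ge z)$ from Lemma \ref{aER+bER} and reduce the stochastic-dominance claim to a one-variable monotonicity statement.

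Write $s := \alpha + \beta = \alpha' + \beta'$. For $z \le s$ both tail probabilities equal $1$, so the inequality $\Prb(\alpha X_1 + \beta X_2 \ge z) \le \Prb(\alpha' X_1 + \beta' X_2 \ge z)$ is trivial there. For $z > s$, substituting the formula from Lemma \ref{aER+bER} and cancelling the common term $s/z$, the claim reduces to
\[
\alpha\beta\,\log\!\left(1+\tfrac{z(z-s)}{\alpha\beta}\right) \;\le\; \alpha'\beta'\,\log\!\left(1+\tfrac{z(z-s)}{\alpha'\beta'}\right).
\]
Setting $t := z(z-s) > 0$ (recall $z > s$), it suffices to prove that the function
\[
g(a) \;:=\; a\,\log\!\left(1+\tfrac{t}{a}\right)
\]
is nondecreasing in $a > 0$, since by assumption $\alpha\beta \le \alpha'\beta'$.

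For the monotonicity, I would just compute
\[
g'(a) \;=\; \log\!\left(1+\tfrac{t}{a}\right) - \frac{t/a}{1+t/a}.
\]
Writing $u := t/a > 0$ and $h(u) := \log(1+u) - u/(1+u)$, we have $h(0) = 0$ and
\[
h'(u) \;=\; \frac{1}{1+u} - \frac{1}{(1+u)^2} \;=\; \frac{u}{(1+u)^2} \;>\; 0
\]
for $u>0$, so $h(u) > 0$ and hence $g'(a) > 0$, giving $g(a) \le g(a')$, which is the desired inequality.

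There is no real obstacle here — once Lemma \ref{aER+bER} is in hand, the only content is the elementary fact that $a \mapsto a\log(1+t/a)$ is increasing for $a,t > 0$ (an instance of the familiar concavity-based inequality $\log(1+u) > u/(1+u)$). The footnote equivalence $|\alpha-\beta| \ge |\alpha'-\beta'|$ follows immediately from $(\alpha-\beta)^2 = s^2 - 4\alpha\beta$, so no separate argument is needed for it.
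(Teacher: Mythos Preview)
Your proof is correct and follows essentially the same route as the paper: invoke Lemma \ref{aER+bER}, cancel the common $s/z$ term, and reduce to the monotonicity of $a\mapsto a\log(1+t/a)$ (equivalently, the paper's $t\mapsto t\log(1+1/t)$). The only difference is cosmetic---you spell out the derivative computation verifying this monotonicity, whereas the paper simply asserts it.
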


\begin{proof}
Let $Z=\alpha V_{1}+\beta V_{2}$ and $Z^{\prime }=\alpha ^{\prime
}V_{1}+\beta ^{\prime }V_{2},$ and put $\gamma =\alpha +\beta =\alpha
^{\prime }+\beta ^{\prime }.$ Using Lemma \ref{aER+bER}, for $z\leq \gamma $
we have $\mathbb{P}[Z\geq z]=\mathbb{P}[Z^{\prime }\geq z]=1,$ and for $%
z>\gamma $ we get%
\begin{eqnarray*}
\mathbb{P}[Z\geq z] &=&\frac{\alpha \beta }{z^{2}}\ln \left( 1+\frac{%
z^{2}-\gamma z}{\alpha \beta }\right) +\frac{\gamma }{z} \\
&\leq &\frac{\alpha ^{\prime }\beta ^{\prime }}{z^{2}}\ln \left( 1+\frac{%
z^{2}-\gamma z}{\alpha ^{\prime }\beta ^{\prime }}\right) +\frac{\gamma }{z}=%
\mathbb{P}[Z^{\prime }\geq z],
\end{eqnarray*}%
since $t\ln (1+1/t)$ is increasing in $t>0$, and $\alpha \beta
/(z^{2}-\gamma z)\leq $\allowbreak $\alpha ^{\prime }\beta ^{\prime
}/(z^{2}-\gamma z)$ by our assumption that $\alpha \beta \leq \alpha
^{\prime }\beta ^{\prime }$ together with $z>\gamma $.
\end{proof}

\bigskip

\begin{corollary}
\label{c:equalizek} Let $V_{1},V_{2},...,V_{k}$ be i.i.d.-$\mathtt{ER,}$ let 
$r_{1},r_{2},...,r_{k}\geq 0,$ and put $\bar{r}=(1/k)\sum_{i=1}^{k}r_{i}$
for the average of the $r_{i}.$ Then $\sum_{i=1}^{k}r_{i}V_{i}$ is
stochastically dominated by $\sum_{i=1}^{k}\bar{r}V_{i}.$
\end{corollary}

\begin{proof}
If, say, $r_{1}<\bar{r}<r_{2},$ then Lemma \ref{l:ER-equalize2} above
implies that $r_{1}V_{1}+r_{2}V_{2}$ is stochastically dominated by $\bar{r}%
V_{1}+r_{2}^{\prime }V_{2}$, where $r_{2}^{\prime }=r_{1}+r_{2}-\bar{r}>0,$
and so\footnote{%
Recall footnote \ref{ft:stoch-dom-convolution}: stochastic dominance is
closed under convolutions.} $\sum_{i=1}^{k}r_{i}V_{i}$ is stochastically
dominated by $\bar{r}V_{1}+r_{2}^{\prime }V_{2}+\sum_{i=3}^{k}r_{i}V_{i}.$
Continue this way until all coefficients become $\bar{r}.$
\end{proof}

\bigskip

We now calculate the revenue obtainable from bundling two independent $%
\mathtt{ER}$ goods. Recall that $w\approx 0.278$ is the solution of the
equation $we^{w+1}=1,$ or $\ln w+w=-1.$

\begin{proposition}
\label{l:brev-ER2}Let $V_{1},V_{2}$ be i.i.d.-$\mathtt{ER}$. Then 
\begin{equation*}
\text{\textsc{BRev}}(V_{1},V_{2})=\text{\textsc{Rev}}(V_{1}+V_{2})=2(w+1)%
\approx 2.56.
\end{equation*}
\end{proposition}

\begin{proof}
Using Lemma \ref{aER+bER} with $\alpha =\beta =1$ yields $p~\mathbb{P}\left[
V_{1}+V_{2}\geq p\right] =p^{-1}\ln (1+p^{2}-2p)+2=2p^{-1}\ln (p-1)+2,$
which attains its maximum of $2w+2$ at $p=1+1/w$ (i.e., $1/(p-1)=w).$
\end{proof}

\bigskip

We estimate the bundling revenue from $k$ independent $\mathtt{ER}$ goods.

\begin{proposition}
\label{l:brev-ERk}There exist constants $c_{1}>0$ and $c_{2}<\infty $ such
that for any $k\geq 2$ and $k$ i.i.d.-$\mathtt{ER}$ goods $%
V_{1},V_{2},...,V_{k}$, 
\begin{equation*}
c_{1}k\log k\leq \text{\textsc{BRev}}(V_{1},V_{2},...,V_{k})=\text{\textsc{%
Rev}}(V_{1}+...+V_{k})\leq c_{2}k\log k.
\end{equation*}
\end{proposition}

\begin{proof}
For a one-dimensional random variable $X$ and a constant $M,$ write $%
X^{M}:=\min \{X,M\}$ for $X$ truncated at $M.$ When $V$ is $\mathtt{ER}$ and 
$M\geq 1$ it is immediate to compute $\mathbb{E}\left[ V^{M}\right] =\ln M+1$
and $\mathrm{Var}(V^{M})\leq 2M$.

$\bullet $ \emph{Lower bound}: For every $p,M>0$ we have \textsc{Rev}$%
(\sum_{i}V_{i})\geq p\cdot \mathbb{P}\left[ \sum_{i}V_{i}\geq p\right] \geq
p\cdot \mathbb{P}\left[ \sum_{i}V_{i}^{M}\geq p\right] $.

When $M=k\ln k$ and $p=(k\ln k)/2$ we get $(k\mathbb{E}\left[ V^{M}\right]
-p)/\sqrt{k\mathrm{Var}(V^{M})}\geq \sqrt{\ln k/8}$, and so $p$ is at least $%
\sqrt{\ln k/8}$ standard deviations below the mean of $%
\sum_{i=1}^{k}V_{i}^{M}$. Therefore, by Chebyshev's inequality, $\mathbb{P}%
\left[ \sum_{i=1}^{k}V_{i}^{M}\geq p\right] \geq 1-8/\ln k\geq 1/2$ for all $%
k$ large enough, and then \textsc{Rev}$(\sum_{i=1}^{k}V_{i})\geq p\cdot
1/2=k\ln k/4$.

$\bullet $ \emph{Upper bound}: We need to bound $\sup_{p\geq 0}p\cdot 
\mathbb{P}\left[ \sum_{i=1}^{k}V_{i}\geq p\right] $.

Consider two cases for $p.$ If $p\leq 6k\ln k$ then $p\cdot \mathbb{P}\left[
\sum_{i=1}^{k}V_{i}\geq p\right] \leq p\leq 6k\ln k$.

If $p\geq 6k\ln k,$ then, taking $M=p$, we have 
\begin{equation}
p\cdot \mathbb{P}\left[ \sum_{i=1}^{k}V_{i}\geq p\right] \leq p\cdot \mathbb{%
P}\left[ \sum_{i=1}^{k}V_{i}^{p}\geq p\right] +p\cdot \mathbb{P}\left[
V_{i}>p\text{ for some }1\leq i\leq k\right] .  \label{eq:z_large}
\end{equation}%
The second term on the right-hand side is at most $p\cdot k\cdot
(1-F_{V}(p))=k$ (since $F_{V}(p)=1-1/p$). To estimate the first term, we
again use Chebyshev's inequality: $\sum_{i=1}^{k}V_{i}^{p}$ has mean $k(\ln
p+1)$ and standard deviation $\sqrt{2kp}$. When $k$ is large enough we have $%
p/(k(\ln p+1))\geq 2$ (recall that $p\geq 6k\ln k$), hence $(p-k(\ln p+1))/%
\sqrt{2kp}\geq (p/2)/\sqrt{2kp}=\sqrt{p/(8k)},$ and so $p$ is at least $%
\sqrt{p/(8k)}$ standard deviations above the mean of $%
\sum_{i=1}^{k}V_{i}^{p} $. Therefore $p\cdot \mathbb{P}\left[
\sum_{i=1}^{k}V_{i}^{p}\geq p\right] \leq p\cdot (8k)/p=8k$, which implies $%
p\cdot \mathbb{P}\left[ \sum_{i=1}^{k}V_{i}\geq p\right] \leq 9k$ by (\ref%
{eq:z_large}).

Altogether, \textsc{Rev}$(\sum_{i=1}^{k}V_{i})\leq \max \{6k\ln k,9k\}=6k\ln
k$ for all $k$ large enough.
\end{proof}

\bigskip

\noindent \textbf{Remark.} A more precise analysis, based on a Generalized
Central Limit Theorem\ (see, e.g., Zaliapin, Kagan, and Schoenberg 2005),
shows that \textsc{Rev}$(\sum_{i=1}^{k}V_{i})/(k\ln k)$ converges to $1$ as $%
k\rightarrow \infty .$ Indeed, the sequence $%
(\sum_{i=1}^{k}V_{i}-b_{k})/a_{k}$ with $a_{k}=k\pi /2$ and\footnote{\label%
{ftn:O}We use the standard computer science notations: $f(k)=\mathrm{O}%
(g(k)) $ means that there exists a constant $c<\infty $ such that $f(k)\leq
cg(k)$ for all $k,$ and $f(k)=\Omega (g(k))$ means that there exists a
constant $c>0 $ such that $f(k)\geq cg(k)$ for all $k.$ Also, $f(k)=\Theta
(g(k))$ means that $f(k)=\mathrm{O}(g(k)$ and $f(k)=\Omega (g(k))$ both
hold; i.e., there exist $c_{1}>0$ and $c_{2}<\infty $ such that $%
c_{1}g(x)\leq f(k)\leq c_{2}g(k)$ for all $k.$} $b_{k}=k\ln k+\Theta (k)$
converges in distribution to the Cauchy distribution as $k\rightarrow \infty 
$. The revenue from a Cauchy distribution can easily be shown to be bounded
(by $1/\pi ;$ use (\ref{eq:one good})), and so it follows that \textsc{Rev}$%
(\sum_{i=1}^{k}V_{i})=k\ln k+\Theta (k).$

\bigskip

As a corollary, we get that separate selling may yield no more than a
fraction of the order of $1/\log k$ of the optimal revenue.

\begin{corollary}
\label{c:sep-k-er}There exists a constant $c<\infty $ such that for any $%
k\geq 2$ and $k$ i.i.d.-$\mathtt{ER}$ goods $V_{1},V_{2},...,V_{k}$%
\begin{equation*}
\text{\textsc{SRev}}(V_{1},V_{2},...,V_{k})\leq \frac{c}{\log k}\text{%
\textsc{Rev}}(V_{1},V_{2},...,V_{k}).
\end{equation*}
\end{corollary}

\begin{proof}
We have \textsc{Rev}$(V_{1},...,V_{k})\geq $\textsc{BRev}$%
(V_{1},...,V_{k})\geq c_{1}\log k\cdot k=c_{1}\log k\cdot \text{\textsc{SRev}%
}(V_{1},...,V_{k})$ by Proposition \ref{l:brev-ERk}, and \textsc{SRev}$%
(V_{1},...,V_{k})=k$ (because \textsc{Rev}$(V_{i})=1$).
\end{proof}

\subsection{Separate vs. Bundled Selling\label{s:ap-sep-bun}}

We start with an example showing that the $1/k$ bound for $k$ independent
goods of Proposition \ref{p:bun>sep} (i) is tight.

\begin{example}
\label{ex:1/k} \textsc{BRev}$(X_{1},...,X_{k})=(1/k+\varepsilon )\cdot $%
\textsc{SRev}$(X_{1},...,X_{k})$: Take a large $M$ and let $X_{i}$ have
support $\{0,M^{i}\}$ with $\mathbb{P}[X_{i}=M^{i}]=M^{-i}$. Then \textsc{Rev%
}$(X_{i})=1$ and so \textsc{SRev}$(X_{1},...,X_{k})=k$, while \textsc{BRev}$%
(X_{1},...,X_{k})$ is easily seen to be at most $\max_{i}M^{i}\cdot
(M^{-i}+\cdots +M^{-k})\leq 1+1/(M-1)$. Because \textsc{SRev}$\leq $\textsc{%
Rev }this also shows that bundling may yield no more than a $1/k$ fraction
of the optimal revenue: \textsc{BRev}$(X_{1},...,X_{k})\leq (1/k+\varepsilon
)\cdot $\textsc{Rev}$(X_{1},...,X_{k})$\textsc{.}
\end{example}

\bigskip

We next prove that a \textsc{GFOR} of the order of $1/k$ is tight.

\begin{lemma}
\label{B-1/k} There exists a constant $c>0$ such that for any $k\geq 2$ and
any $k$ independent goods $X_{1},X_{2},...,X_{k},$ 
\begin{equation*}
\text{\textsc{BRev}}(X_{1},X_{2},...,X_{k})\geq \frac{c}{k}~\text{\textsc{Rev%
}}(X_{1},X_{2},...,X_{k}).
\end{equation*}
\end{lemma}

\begin{proof}
For $k$ a power of two, we use (cf. the proof of Theorem \ref{th:srev-k} in
Section \ref{s:proofs-thm3-4}) the decomposition of (\ref{eq:k/2}) to obtain
by induction, starting from \textsc{Rev}$(X_{1})=$\textsc{BRev}$(X_{1}),$
the inequality \textsc{Rev}$(X_{1},...,X_{k})\leq (3k-2)\cdot $\textsc{BRev}$%
(X_{1},...,X_{k})$ (the induction step uses the fact that the bundled
revenue from a subset of the goods is at most the bundled revenue from all
of them, since all the $X_{i}$ are nonnegative). Again, when $k$ is not a
power of $2$ we can pad to the next power of $2$ with goods that have value
identically zero, which at most doubles $k$.
\end{proof}

\bigskip

Next, we consider i.i.d. goods, where better bounds can be obtained: the
bundling revenue cannot be much smaller than the separate revenue.

\begin{lemma}
\label{l:2/3}For any two i.i.d. goods $X_{1},X_{2},$ 
\begin{equation*}
\text{\textsc{BRev}}(X_{1},X_{2})\geq \frac{2}{3}~\text{\textsc{SRev}}%
(X_{1},X_{2}).
\end{equation*}
\end{lemma}

\begin{proof}
Let $F$ be the distribution of the $X_{i},$ let $p$ be the optimal one-good
price for $F,$ and put $\alpha :=1-F(p)$; thus, \textsc{Rev}$(X_{i})=p\alpha 
$. If $\alpha \leq 2/3$ then the bundling mechanism can offer a price of $p,$
and then the probability that the bundle will be sold is at least the
probability that one of the goods by itself has value $p$, which is $2\alpha
-\alpha ^{2}=\alpha (2-\alpha )\geq 4\alpha /3$; the revenue is then at
least $p\cdot 4\alpha /3=(4/3)$\textsc{Rev}$(X_{i})$. If $\alpha \geq 2/3$
then the bundling mechanism can offer a price of $2p$, and then the
probability that it will be accepted is at least the probability that both
goods will get a value of at least $p$, which is $\alpha ^{2}$; the revenue
is then $2p\cdot \alpha ^{2}\geq (4/3)p\alpha =(4/3)$\textsc{Rev}$(X_{i}).$
In both cases the bundling revenue was at least $(4/3)\text{\textsc{Rev}}%
(X_{i})=(2/3)$\textsc{SRev}$(X_{1},X_{2})$.
\end{proof}

\bigskip

This $2/3$ bound is tight.

\begin{example}
\label{ex:2/3}\textsc{BRev}$(X_{1},X_{2})=(2/3)\cdot $\textsc{SRev}$%
(X_{1},X_{2})$: Let $X_{i}$ have support $\{0,1\}$ with $\mathbb{P}%
[X_{i}=1]=2/3$; then \textsc{Rev}$(X_{i})=2/3$ while \textsc{BRev}$%
(X_{1},X_{2})=8/9$ (which is obtained both at price $1$ and at price $2$).%
\footnote{%
It can be checked that the optimal revenue is attained here by selling
separately, i.e., \textsc{Rev}$(X_{1},X_{2})=$\textsc{SRev}$%
(X_{1},X_{2})=4/3 $.}
\end{example}

\subsection{Many I.I.D. Goods\label{s:k-iid}}

It is well known that when the goods are independent and identically
distributed, and their number $k$ tends to infinity, then the bundling
revenue approaches the optimal revenue. Even more, essentially all the
buyer's surplus can be extracted by selling optimally the bundle of all
goods. The logic is quite simple: the law of large numbers tells us that
there is almost no uncertainty about the sum of many i.i.d. random
variables, and so the seller essentially knows this sum and may ask for it
as the bundle price. For completeness we state this result and provide a
short proof, which also covers the case where the expectation is infinite.

\begin{theorem}[Armstrong 1999, Bakos and Brynjolfsson 1999]
Let $X_{i}$ be i.i.d. one-good random valuations. Then%
\begin{equation*}
\lim_{k\rightarrow \infty }\frac{\text{\textsc{BRev}}(X_{1},X_{2},...,X_{k})%
}{k}=\lim_{k\rightarrow \infty }\frac{\text{\textsc{Rev}}%
(X_{1},X_{2},...,X_{k})}{k}=\mathbb{E}\left[ X_{1}\right] .
\end{equation*}
\end{theorem}

\begin{proof}
We always have \textsc{BRev}$(X_{1},...,X_{k})\leq $\textsc{Rev}$%
(X_{1},...,X_{k})\leq k\mathbb{E}\left[ X_{1}\right] $ (the second
inequality follows from $s(x)=q(x)\cdot x-b(x)\leq \sum_{i}x_{i}$ by IR).
Let us assume first that the $X_{i}$ have finite expectation and finite
variance. In this case if we charge a price of $(1-\varepsilon )k\mathbb{E}%
\left[ X_{1}\right] $ for the bundle, then, by Chebyshev's inequality, the
probability that the bundle will not be bought is at most \textsc{Var}$%
(X_{1})/(\varepsilon ^{2}\mathbb{E}\left[ X_{1}\right] \sqrt{k})$, and this
goes to zero as $k$ increases.

If the expectation or variance is infinite, then consider the truncated
distribution where values above a certain $M$ are replaced by $M$, which has
finite expectation and variance. We can choose the finite $M$ so as to bring
the expectation of the truncated distribution as close as we desire to the
original one (including as high as we desire, if the original distribution
has infinite expectation).
\end{proof}

\bigskip

Despite the apparent strength of this result, it does not provide any
guarantees for any \emph{fixed value} of $k$. Indeed, we now show that for
every large enough $k$ we have \textsc{GFOR}$($\textsc{bundled;~}$k$ i.i.d.
goods$)\leq 57\%$ (recall that it is at least $1/4$ by Proposition \ref%
{p:bun>sep} (ii)). Recently, Kupfer (2016) obtained the precise value of the
limit of this \textsc{GFOR} as $k$ increases; it turns out to be
approximately $55.9\%.$

\begin{example}
\label{ex:57} \emph{For every }$k$\emph{\ large enough, a one-dimensional
distribution }$F$\emph{\ (which depends on }$k)$ \emph{such that }\textsc{%
BRev}$(X_{1},...,X_{k})\leq 0.57\cdot $\textsc{SRev}$(X_{1},...,X_{k}),$%
\emph{\ and thus }\textsc{BRev}$(X_{1},...,X_{k})\leq 0.57\cdot $\textsc{Rev}%
$(X_{1},...,X_{k})$\emph{, where the }$X_{i}$\emph{\ are i.i.d.-}$F$\emph{\
goods:} For each $k$ consider the distribution $F$ on $\{0,1\}$ with $%
\mathbb{P}[X=1]=c/k$ where $c\approx 1.256$ is the positive solution of $%
1-e^{-c}=2(1-(1+c)e^{-c})$; the revenue from selling a single good is thus $%
c/k$, and so \textsc{SRev}$(X_{1},...,X_{k})=c.$ The bundling mechanism
should clearly offer an integral price. If it offers price $1$ then the
probability of selling is $1-(1-c/k)^{k}$, which converges to $%
1-e^{-c}\approx 0.715$ as $k$ increases. If it offers price $2$ then the
probability of selling is $1-(1-c/k)^{k}-k(c/k)(1-c/k)^{k-1}\rightarrow
1-(1+c)e^{-c},$ and the revenue is twice that, again $\approx 0.715$ in the
limit (recall the equation that $c$ satisfies). If it offers price $3$ then
the probability of selling is $1-(1-c/k)^{k}-k(c/k)(1-c/k)^{k-1}-{\binom{k}{2%
}}(c/k)^{2}(1-c/k)^{k-2}\rightarrow 1-(1+c+c^{2}/2!)e^{-c}\approx 0.13$, and
the revenue is three times that, which is less than $0.715$. For higher
integral prices $m\geq 4$ the probability of selling converges to $%
1-(1+c+...+c^{m-1}/(m-1)!)e^{-c}\leq c^{m}/m!$ (because the corresponding
remainder in the $e^{c}$ series is bounded by $e^{c}c^{m}/m!),$ and the
revenue is thus $\leq c^{m}/(m-1)!$, which is even smaller. Therefore for
all large enough $k$ the optimal bundle price is either $1$ or $2,$ and 
\textsc{BRev}$(X_{1},...,X_{k})/$\textsc{SRev}$(X_{1},...,X_{k})$ is close
to $(1-e^{-c})/c\approx 0.569.$
\end{example}

\subsection{When Bundling Is Optimal\label{s:brev=rev}}

In this appendix we prove Theorem \ref{th:brev-opt}, stated in Section \ref%
{sus:indep-bundled}: for two i.i.d. goods, if the one-good distribution
satisfies condition (\ref{eq:3/2}), then bundling is optimal.

\bigskip

\begin{proof}[Proof of Theorem \protect\ref{th:brev-opt}]
Let $X=(Y,Z)$ where $Y,Z$ are i.i.d. with cumulative distribution $F$ and
probability density function $f$ that satisfies (\ref{eq:3/2}). We will show
that for every IC and IR mechanism $\mu $ there is a bundled mechanism $\hat{%
\mu}$ that yields at least as much revenue, i.e., $R(\hat{\mu};X)\geq R(\mu
;X);$ this proves that \textsc{Rev}$(X)=~$\textsc{BRev}$(X).$

Let $\mu =(q,s)$ be an IC and IR mechanism with buyer payoff function $b$;
as in the proof of Theorem \ref{th:73} in Appendix \ref{s:proof-iid}, we
assume without loss of generality that the mechanism is symmetric and
satisfies NPT. Therefore%
\begin{eqnarray*}
\mathbb{E}\left[ s(Y,Z)\right] &=&\mathbb{E}\left[
Yq_{1}(Y,Z)+Zq_{2}(Y,Z)-b(Y,Z)\right] \\
&=&\mathbb{E}\left[ 2Yq_{1}(Y,Z)-b(Y,Z)\right] ,
\end{eqnarray*}%
because $\mathbb{E}\left[ Zq_{2}(Y,Z)\right] =\mathbb{E}\left[ Zq_{1}(Z,Y)%
\right] =\mathbb{E}\left[ Yq_{1}(Y,Z)\right] $ by symmetry and then by
interchanging the i.i.d. variables $Y$ and $Z.$ Truncating at $M$ therefore
yields (recall that $s$ is nonnegative by NPT)%
\begin{equation*}
R(\mu ;X)=\lim_{M\rightarrow \infty }r_{M}(b),
\end{equation*}%
where 
\begin{equation}
r_{M}(b)%
{\;:=\;}%
\lim_{M\rightarrow \infty }\int_{a}^{M}\int_{a}^{M}\left(
2yb_{y}(y,z)-b(y,z)\right) f(y)f(z)\,\mathrm{d}y\,\mathrm{d}z  \label{eq:rM}
\end{equation}%
(because $q_{1}(y,z)=b_{y}(y,z),$ the derivative of $b(y,z)$ with respect to
its first variable, for almost every $(y,z)$ by Proposition \ref{p:IC
=b-convex}, and the distribution $F$ is continuous).

For each fixed $z$ integrate by parts the $2yb_{y}(y,z)f(y)$ term:%
\begin{eqnarray*}
\int_{a}^{M}2b_{y}(y,z)yf(y)\,\mathrm{d}y &=&\left[ 2b(y,z)yf(y)\right]
_{a}^{M}-\int_{a}^{M}2b(y,z)\left( f(y)+yf^{\prime }(y)\right) \,\mathrm{d}y
\\
&=&2b(M,z)Mf(M)-2b(a,z)af(a) \\
&&-\int_{a}^{M}2b(y,z)\left( f(y)+yf^{\prime }(y)\right) \,\mathrm{d}y.
\end{eqnarray*}%
Substituting this in (\ref{eq:rM}) yields%
\begin{eqnarray}
r_{M}(b) &=&2Mf(M)\int_{a}^{M}b(M,z)f(z)\,\mathrm{d}z  \notag \\
&&+2\int_{a}^{M}\int_{a}^{M}b(y,z)\left( -\frac{3}{2}f(y)-yf^{\prime
}(y)\right) f(z)\,\mathrm{d}y\,\mathrm{d}z  \label{eq:r_M(b)} \\
&&-2af(a)\int_{a}^{M}b(a,z)f(z)\,\mathrm{d}z.  \notag
\end{eqnarray}

Define $\hat{b}(y,z):=b(y+z-a,a)=b(a,y+z-a)$ for every $(y,z)$ with $y,z\geq
a.$ Then $\hat{b}$ is a symmetric convex function on the quadrant $[a,\infty
)^{2},$ it coincides with $b$ on the boundaries $y=a$ and $z=a,$ and is at
least as large as $b$ everywhere\footnote{%
The function $\hat{b}$ is in fact the \emph{smallest} function satisfying
these three properties (i.e., it is a convex function, coincides with $b$ on
the boundary, and is everywhere $\geq b);$ see Hart (2012) for an
interesting observation on this.}: indeed, the convexity of $b$ yields%
\begin{eqnarray}
b(y,z) &\leq &\frac{y-a}{y+z-2a}b(y+z-a,a)+\frac{z-a}{y+z-2a}b(a,y+z-a)
\label{eq:b-le-b-hat} \\
&=&b(y+z-a,a)=\hat{b}(y,z)  \notag
\end{eqnarray}%
for every $(y,z)\in \lbrack a,\infty )^{2}$. Replacing $b$ with $\hat{b}$
can only increase the first and second terms of (\ref{eq:r_M(b)}), since all
the coefficients of $b$ there are nonnegative (use (\ref{eq:3/2})), while it
does not affect the third term. Therefore $r_{M}(b)\leq r_{M}(\hat{b})$ for
all $M$.

Define $\hat{q}(y,z):=(q_{1}(y+z-a,a),q_{1}(y+z-a,a))\in \lbrack 0,1]^{2}$
and $\hat{s}(y,z):=\hat{q}(y,z)\cdot (y,z)-\hat{b}(y,z);$ then $\hat{q}(y,z)$
is a subgradient of\footnote{%
At points of differentiability $\hat{b}_{y}(y,z)=\hat{b}%
_{z}(y,z)=b_{y}(y+z-a,a).$} $\hat{b}$ at $(y,z),$ and so $\hat{\mu}=(\hat{q},%
\hat{s})$ is an IC and IR mechanism (by Proposition \ref{p:IC =b-convex}).
Since $R(\hat{\mu};X)=\lim_{M\rightarrow \infty }r_{M}(\hat{b})$ and $%
r_{M}(b)\leq r_{M}(\hat{b})$ for all $M,$ we get $R(\mu ;X)\leq R(\hat{\mu}%
;X).$ Finally, $\hat{\mu}$ is a bundled mechanism, since $\hat{q}$ and $\hat{%
s}$ are functions of $y+z$ (the corresponding one-good mechanism for $T:=Y+Z$
is $(\tilde{q},\tilde{s})$ with $\tilde{q}(t):=\hat{q}(t-a,a)$ and $\tilde{s}%
(t):=\hat{s}(t-a,a)).$ This completes the proof.
\end{proof}

\subsection{Multiple Buyers\label{s:n-buyers-proofs}}

The generalization from one buyer to $n\geq 1$ buyers is as follows (recall
Section \ref{sus:model}). One seller is selling $k$ goods to $n$ buyers$;$
these goods have no value or cost to the seller. For each buyer $j=1,...,n$
and each good $i=1,...,k,$ buyer $j$'s value for good $i$ is given by a
nonnegative random variable $X_{i}^{j};$ put $X^{j}=(X_{i}^{j})_{i=1,..,k}$
for the $\mathbb{R}_{+}^{k}$-valued random valuation vector of buyer $j,$
and $X_{i}=(X_{i}^{j})_{j=1,...,n}$ for the $\mathbb{R}_{+}^{n}$-valued
vector of values of good $i;$ put also $X=(X_{i}^{j})_{j=1,...,n;i=1,...,k},$
and let $F$ be the joint distribution of all these $kn$ random variables.
The distribution $F$ is commonly known; in addition, each buyer $j$ knows
the realization of his random valuation $X^{j}.$ Finally, the seller as well
as all the buyers are risk-neutral and have quasi-linear utilities, and the
valuation of a set of goods to each buyer is additive.

A (direct)\emph{\ mechanism} $\mu =(q^{j},s^{j})_{j=1,...,n}$ consists of an
allocation function $q^{j}:\mathbb{R}_{+}^{kn}\rightarrow \lbrack 0,1]^{k}$
and a payment function $s^{j}:\mathbb{R}_{+}^{kn}\rightarrow \mathbb{R}$ for
each buyer $j=1,...,n,$ where $\sum_{j=1}^{n}q_{i}^{j}(x)\leq 1$ for every
good $i=1,..,k;$ the payoff of buyer $j$ is $b^{j}(x)=q^{j}(x)\cdot
x^{j}-s^{j}(x),$ and that of the seller is $S(x):=\sum_{j=1}^{n}s^{j}(x).$
Two standard equilibrium notions are used for the $n$-person game among the
buyers (once the mechanism $\mu $ is given): \textquotedblleft \emph{%
dominant strategy}" (\textsc{ds}) and \textquotedblleft \emph{Bayesian Nash}%
\textquotedblright\ (\textsc{bn}), which yield the so-called ex-post and
interim equilibria, respectively. The corresponding conditions are:

\begin{itemize}
\item In the dominant strategy case: \emph{incentive compatibility} (\textbf{%
IC-DS}) requires that%
\begin{equation*}
b^{j}(x)=q^{j}(x)\cdot x^{j}-s^{j}(x)=\max_{\tilde{x}^{j}\in \mathbb{R}%
_{+}^{k}}\left[ q^{j}(\tilde{x}^{j},x^{-j})\cdot x^{j}-s^{j}(\tilde{x}%
^{j},x^{-j})\right]
\end{equation*}%
for every $j=1,...,n$ and $x\in \mathbb{R}_{+}^{kn}$; \emph{individual
rationality} (\textbf{IR-DS}) requires that $b^{j}(x)\geq 0$ for every $j$
and $x\in \mathbb{R}_{+}^{kn}.$

\item In the Bayesian Nash case: \emph{incentive compatibility} (\textbf{%
IC-BN}) requires that\footnote{%
The conditions in the Bayesian Nash case depend on the mechanism $\mu $ and
(the distribution of) the valuations $X,$ whereas in the dominant strategy
case they depend only on the mechanism $\mu .$} 
\begin{equation*}
\bar{b}^{j}(x^{j})%
{\;:=\;}%
\mathbb{E}\left[ b^{j}(X)|X^{j}=x^{j}\right] =\max_{\tilde{x}^{j}\in \mathbb{%
R}_{+}^{k}}\mathbb{E}\left[ q^{j}(\tilde{x}^{j},X^{-j})\cdot x^{j}-s^{j}(%
\tilde{x}^{j},X^{-j})|X^{j}=x^{j}\right]
\end{equation*}%
for every $j=1,...,n$ and $x^{j}\in \mathbb{R}_{+}^{k};$ \emph{individual
rationality} (\textbf{IR-BN}) requires that $\bar{b}^{j}(x^{j})\geq 0$ for
every $j$ and $x^{j}\in \mathbb{R}_{+}^{k}$.
\end{itemize}

Let $R(\mu ;X):=\mathbb{E}\left[ S(X)\right] \equiv \mathbb{E}\left[
\sum_{j=1}^{n}s^{j}(X)\right] $ denote the seller's expected revenue from
the mechanism $\mu $ for the random valuations $X.$ Let \textsc{Rev}$^{DS}(X%
\mathcal{)}$ stand for the maximal revenue obtained from $k$ goods and $n$
buyers with random valuations $X$ using dominant strategy implementation,
i.e., mechanisms that satisfy IC-DS and IR-DS; let \textsc{Rev}$^{BN}(X%
\mathcal{)}$ stand for the maximal revenue using Bayesian Nash
implementation, i.e., mechanisms that satisfy IC-BN and IR-BN (in the
one-buyer case, i.e., when $n=1,$ these two concepts clearly coincide).

\bigskip

\noindent \textbf{Remarks.} \emph{(a) Bayesian Nash implementation for
independent buyers. }In the Bayesian Nash case, when the buyers' random
valuation vectors $X^{1},X^{2},...,X^{n}$ are \emph{independent}, only the
expectations of allocations and payments, conditional on each buyer's own
values, matter: replacing $q^{j}(x)$ with $\bar{q}^{j}(x^{j}):=\mathbb{E}%
\left[ q^{j}(X)|X^{j}=x^{j}\right] =\mathbb{E}\left[ q^{j}(x^{j},X^{-j})%
\right] $ and $s^{j}(x)$ with $\bar{s}^{j}(x^{j}):=\mathbb{E}\left[
s^{j}(X)|X^{j}=x^{j}\right] =\mathbb{E}\left[ s^{j}(x^{j},X^{-j})\right] $
throughout affects neither the IC-BN and IR-BN constraints nor the revenue.%
\footnote{%
However, the feasibility conditions $\sum_{i}q_{\ell }^{i}\leq 1$ on the
allocations \emph{cannot} be directly expressed in terms of the $\bar{q}^{i}$
(this is known as the \textquotedblleft implementability" condition; see
Border 1991 for a necessary and sufficient condition for implementability,
and Hart and Reny 2015b for a simple restatement and proof).} We will thus
assume without loss of generality that mechanisms in the Bayesian Nash case
are given in this \textquotedblleft reduced form" where $q^{j}$ and $s^{j}$
depend only on $x^{j}$ (rather than on the entire $x$), for all $j.$

\emph{(b) Non-positive transfer (NPT) and subdomain.} A mechanism $\mu
=(q^{j},s^{j})_{j=1,...,n}$ for $n\geq 1$ buyers and $k\geq 1$ goods
satisfies \textbf{NPT} if $s^{j}(x)\geq 0$ for every $j$ and every $x.$ The
results (i)--(iv) of Proposition \ref{p:sub-dom} in Section \ref{sus:model}
extend to multiple buyers, as follows.

In the dominant strategy case, we consider separately each $j=1,...,n$ and
each $x^{-j}\in \mathbb{R}_{+}^{k(n-1)},$ and obtain, in particular, that
NPT is equivalent to $s^{j}(0,x^{-j})=0$ for all $j$ and all $x^{-j},$ that
NPT can be assumed without loss of generality when maximizing revenue, and
that the subdomain property holds: $\mathbb{E}\left[ \sum_{j}s^{j}(X)\,%
\mathbf{1}_{X\in A}\right] \leq \,$\textsc{Rev}$^{DS}(X\,\mathbf{1}_{X\in
A})\leq \,$\textsc{Rev}$^{DS}(X)$ for every $A\subseteq \mathbb{R}_{+}^{kn}.$

In the Bayesian Nash case, when the buyers are independent (and each payment 
$s^{j}$ depends only on $x^{j}$; see Remark (a) above), we obtain in
particular that NPT is equivalent to $\bar{s}^{j}(0)=0$ for all $j,$ that
NPT can be assumed without loss of generality when maximizing revenue, and
that $\mathbb{E}\left[ \sum_{j}s^{j}(X)\,\mathbf{1}_{X\in A}\right] =\mathbb{%
E}\left[ \sum_{j}\bar{s}^{j}(X^{j})\,\mathbf{1}_{X\in A}\right] \leq \,$%
\textsc{Rev}$^{BN}(X\,\mathbf{1}_{X\in A})\leq \,$\textsc{Rev}$^{BN}(X)$ for
every\footnote{%
NPT need \emph{not} hold in the Bayesian Nash case when the buyers' random
valuations are \emph{not independent}. In this case, optimal mechanisms may
make use of negative payments $s^{j}(x)$ (i.e., positive transfers); cf. Cr%
\'{e}mer and McLean (1988). In addition, the requirement that $s^{j}$
depends only on $x^{j}$ is needed because the restriction to a set $A$ of
values of $X$ may introduce dependencies between the coordinates of $X\,%
\mathbf{1}_{X\in A},$ and then $\mathbb{E}\left[ \sum_{j}s^{j}(X)\,\mathbf{1}%
_{X\in A}\right] =\mathbb{E}\left[ \sum_{j}\bar{s}^{j}(X^{j})\,\mathbf{1}%
_{X\in A}\right] $ need not hold.} $A\subseteq \mathbb{R}_{+}^{kn}.$

\bigskip

We state the result separately for the two kinds of implementation, since in
the dominant strategy case the result is stronger: the requirement that the
buyers' random valuations are independent is not needed (i.e., while there
is independence between the two goods---$X_{1}^{j}$ and $X_{2}^{\ell }$ are
independent for any two buyers $j,\ell =1,...,n$---we allow, for each good $%
i=1,2,$ the buyers' values $X_{i}^{1},X_{i}^{2},...,X_{i}^{n}$ to be
arbitrarily correlated). The two theorems below give Theorem \ref{th:gfor-n}.

\begin{theorem}
\label{th:n-DS}In the case of $n$ buyers, two goods, and dominant strategy
implementation, if the random valuation vectors of the two goods $%
X_{1}=(X_{1}^{j})_{j=1,...,n}$ and $X_{2}=(X_{2}^{j})_{j=1,...,n}$ are
independent, then \textsc{GFOR}$($\textsc{separate}$)\geq 1/2,$ i.e.,%
\begin{equation*}
\text{\textsc{Rev}}^{DS}(X_{1})+\text{\textsc{Rev}}^{DS}(X_{2})\geq \frac{1}{%
2}\text{\textsc{Rev}}^{DS}(X_{1},X_{2}).
\end{equation*}
\end{theorem}

\begin{theorem}
\label{th:n-BN}In the case of $n$ independent buyers, two goods, and
Bayesian Nash implementation, if the random valuation vectors of the two
goods $X_{1}=(X_{1}^{j})_{j=1,...,n}$ and $X_{2}=(X_{2}^{j})_{j=1,...,n}$
are independent, then \textsc{GFOR}$($\textsc{separate}$)\geq 1/2,$ i.e.,%
\begin{equation*}
\text{\textsc{Rev}}^{BN}(X_{1})+\text{\textsc{Rev}}^{BN}(X_{2})\geq \frac{1}{%
2}\text{\textsc{Rev}}^{BN}(X_{1},X_{2}).
\end{equation*}
\end{theorem}

\begin{proof}[Proof of Theorems \protect\ref{th:n-DS} and \protect\ref%
{th:n-BN}]
Put $Y=X_{1}=(X_{1}^{j})_{j=1,...,n}$ and $Z=X_{2}=(X_{2}^{j})_{j=1,...,n}$
for the random valuation vectors of good $1$ and good $2$, respectively;
thus $Y$ and $Z$ are $\mathbb{R}_{+}^{n}$-valued random variables and $%
X=(Y,Z).$ Let $\mu =(q^{j},s^{j})_{j=1,...,n}$ be an NPT mechanism for the
two goods that satisfies either IC-DS and IR-DS, or IC-BN and IR-BN; in the
BN case, we assume in addition that it is in reduced form: $q^{j}$ and $%
s^{j} $ depend only on $x^{j}$ (cf. Remark (a) above).\footnote{%
Formally, put here $q^{i}(x):=\bar{q}^{i}(x^{i})$ and $s^{i}(x):=\bar{s}%
^{i}(x^{i});$ this allows the proof to apply mutatis mutandis to both
implementations, dominant strategy and Bayesian Nash.}

We split the total expected revenue from $\mu $ into two parts, according to
which one of\footnote{%
We write $a^{(1)}:=\max_{j=1,...,n}a^{j}$ for the maximal coordinate of a
vector $a=(a^{j})_{j=1,...,n}\in \mathbb{R}^{n}.$} $Y^{(1)}=\max_{j}Y^{j}$
and $Z^{(1)}=\max_{j}Z^{j}$ is higher, and show that%
\begin{eqnarray}
\mathbb{E}\left[ S(Y,Z)\mathbf{1}_{Y^{(1)}\geq Z^{(1)}}\right] &\leq &2\text{%
\textsc{Rev}}(Y)\text{\ \ and}  \label{eq:n-triangle} \\
\mathbb{E}\left[ S(Y,Z)\mathbf{1}_{Z^{(1)}\geq Y^{(1)}}\right] &\leq &2\text{%
\textsc{Rev}}(Z);  \label{eq:n-triangle2}
\end{eqnarray}%
adding the two inequalities yields our result (recall that $S\geq 0$ by NPT).

To prove (\ref{eq:n-triangle}) (from which (\ref{eq:n-triangle2}) follows by
interchanging $Y$ and $Z$), for every fixed vector of values $z\in \mathbb{R}%
_{+}^{n}$ of the $n$ buyers for the second good define a mechanism $(\hat{q},%
\hat{s})\equiv (\hat{q}^{z},\hat{s}^{z})$ for the first good by $\hat{q}%
^{j}(y):=q_{1}^{j}(y,z)$ and $\hat{s}^{j}(y):=s^{j}(y,z)-q_{2}^{j}(y,z)%
\,z^{j}$ for every $y\in \mathbb{R}_{+}^{n}$ and $j=1,...,n$, and put $\hat{S%
}(y):=\sum_{j}\hat{s}^{j}(y)$. The mechanism $(\hat{q},\hat{s})$ is IC and
IR for $y$, since $(q,s)$ was IC and IR for $(y,z)$ (for IC: only the
constraints $(\tilde{y}^{j},z^{j})$ vs. $(y^{j},z^{j})$ matter; for IR, $%
\hat{b}^{j}(y)=b^{j}(y,z)\geq 0$). Then $S(y,z)=\sum_{j}s^{j}(y,z)=\sum_{j}%
\hat{s}^{j}(y)+\sum_{j}z^{j}q_{2}^{j}(y,z)\leq \sum_{j}\hat{s}%
^{j}(y)+z^{(1)}=\hat{S}(y)+z^{(1)}$ (the inequality obtains because $0\leq
z^{j}\leq z^{(1)}$ and $\sum_{j}q_{2}^{j}\leq 1).$ Since $Y$ is independent
of $Z$ we get%
\begin{eqnarray*}
\mathbb{E}\left[ \,S(Y,Z)\mathbf{1}_{Y^{(1)}\geq Z^{(1)}}\,|\,Z=z\right] &=&%
\mathbb{E}\left[ \,S(Y,z)\mathbf{1}_{Y^{(1)}\geq z^{(1)}}\right] \\
&\leq &\mathbb{E}\left[ \hat{S}(Y)\mathbf{1}_{Y^{(1)}\geq z^{(1)}}\right] +%
\mathbb{E}\left[ z^{(1)}\mathbf{1}_{Y^{(1)}\geq z^{(1)}}\,\right] .
\end{eqnarray*}%
The first term is the revenue from a subdomain of values of $y,$ and so it
is at most the maximal revenue \textsc{Rev}$(Y)$ by Remark (b) above (in the
BN case, $\hat{s}^{j}$ depends only on $y^{j}$ since $s^{j}$ and $q^{j}$
depend only on $x^{j}$); as for the second term, 
\begin{equation}
\mathbb{E}\left[ z^{(1)}\mathbf{1}_{Y^{(1)}\geq z^{(1)}}\,\right] =z^{(1)}\,%
\mathbb{P}\left[ Y^{(1)}\geq z^{(1)}\right] \,\leq \text{\textsc{Rev}}(Y),
\label{eq:e(min)-le-rev(max)-n}
\end{equation}%
since posting a price of $z^{(1)}$ and giving the good $y$ to a buyer $j$
with $y^{j}\geq z^{(1)},$ if there is any, constitutes an IC and IR
mechanism for\footnote{%
As in the Proof of Theorem \ref{th:1/2} in Section \ref{s:independent}, we
are \emph{not} using the characterization of optimal mechanisms in the
one-good case (Myerson 1981), but only the fact that posting a price is IC
and IR.} $y$. Thus 
\begin{equation*}
\mathbb{E}\left[ \,S(Y,Z)\mathbf{1}_{Y^{(1)}\geq Z^{(1)}}\,|\,Z=z\right]
\leq 2\text{\textsc{Rev}}(Y)
\end{equation*}%
for every value $z$ of $Z$; taking expectation over $z$ yields (\ref%
{eq:n-triangle}).
\end{proof}

\subsection{Summary of Results\label{ap:summary}}

The two tables below summarize the results of this paper; $X$ stands for $%
(X_{1},X_{2},...,X_{k}),$ where $X_{1},X_{2},...,X_{k}$ are $k$ independent
goods (i.e., one-dimensional nonnegative random variables).\footnote{$%
\mathrm{o}(1)$ means \textquotedblleft converging to $0$ as $k\rightarrow
\infty $\textquotedblright ; see footnote \ref{ftn:O} for the $\mathrm{O,}$~$%
\Omega ,$ and $\Theta $ notations.} Table 1 provides the bounds on the
guaranteed fraction of optimal revenue for selling separately and for
selling as one bundle (with the four main results in bold). Table 2 provides
the comparisons between the separate and bundled revenues.\footnote{%
When comparing \textsc{SRev} and \textsc{BRev }we obtained tight bounds in
almost all cases (unlike the results for \textsc{GFOR}). This is in part due
to the fact that both \textsc{SRev} and \textsc{BRev} reduce to one-good
revenues, for which monotonicity holds (see Proposition \ref{p:1-monot} and
the extensive use of $\mathtt{ER}$ goods; cf. Remark (b) after Proposition %
\ref{p:ER}).}

{\renewcommand{\arraystretch}{1.8}\setlength{\tabcolsep}{7pt}\hspace{-0.75in}%
\begin{tabular}{||c||c|c|c|c||}
\hline\hline
{\setlength{\tabcolsep}{7pt}} & $k=2$ indep. & $k=2$ i.i.d. & $k\geq 2$
indep. & $k\geq 2$ i.i.d. \\ \hline\hline
$\forall X~~\dfrac{\text{\textsc{SRev}}(X)}{\text{\textsc{Rev}}(X)}\geq $ & $%
\mathbf{\dfrac{1}{2}}$ & $\mathbf{\dfrac{e}{e+1}\approx 0.73}$ & 
\multicolumn{2}{|c||}{$\mathbf{\Omega \left( \dfrac{1}{\mathrm{\mathbf{log}}%
^{2}k}\right) }$} \\ 
& \textbf{(Th.\ref{th:1/2})} & \textbf{(Th.\ref{th:73})} & 
\multicolumn{2}{|c||}{\textbf{(Th.\ref{th:srev-k})}} \\ \hline
$\exists X~~\dfrac{\text{\textsc{SRev}}(X)}{\text{\textsc{Rev}}(X)}\leq $ & 
\multicolumn{2}{c}{$\dfrac{1}{1+w}\approx 0.78$} & \multicolumn{2}{|c||}{$%
\mathrm{O}\left( \dfrac{1}{\log k}\right) $} \\ 
& \multicolumn{2}{c}{(Pr.\ref{p:gfor-le})} & \multicolumn{2}{|c||}{(Co.\ref%
{c:sep-k-er})} \\ \hline\hline
$\forall X~~\dfrac{\text{\textsc{BRev}}(X)}{\text{\textsc{Rev}}(X)}\geq $ & $%
\dfrac{1}{2}\cdot \dfrac{1}{2}=\dfrac{1}{4}$ & $\dfrac{e}{e+1}\cdot \dfrac{2%
}{3}$ & $\Omega \left( \dfrac{1}{k}\right) $ & $\mathbf{\Omega \left( \dfrac{%
1}{\mathrm{\mathbf{log}}\;k}\right) }$ \\ 
& (Th.\ref{th:1/2}+Pr.\ref{p:bun>sep}(i)) & (Th.\ref{th:73}+Le.\ref{l:2/3})
& (Le.\ref{B-1/k}) & \textbf{(Th.\ref{th:brev-k})} \\ \hline
$\exists X~~\dfrac{\text{\textsc{BRev}}(X)}{\text{\textsc{Rev}}(X)}\leq $ & $%
\dfrac{1}{2}+\varepsilon $ & $\dfrac{2}{3}$ & $\dfrac{1}{k}+\varepsilon $ & $%
\approx 0.57+\mathrm{o}(1)$ \\ 
& (Ex.\ref{ex:1/k}) & (Ex.\ref{ex:2/3}) & (Ex.\ref{ex:1/k}) & (Ex.\ref{ex:57}%
) \\ \hline\hline
\end{tabular}%
}

\begin{center}
\textbf{Table 1}. Summary of results for \textsc{GFOR}
\end{center}

{\renewcommand{\arraystretch}{1.8}\setlength{\tabcolsep}{7pt}\hspace{-1in} 
\begin{tabular}{||c||c|c|c|c||}
\hline\hline
{\setlength{\tabcolsep}{7pt}} & $k=2$ indep. & $k=2$ i.i.d. & $k\geq 2$
indep. & $k\geq 2$ i.i.d. \\ \hline\hline
$\inf_{X}\dfrac{\text{\textsc{SRev}}(X)}{\text{\textsc{BRev}}(X)}=$ & 
\multicolumn{2}{c}{$\dfrac{1}{1+w}\approx 0.78$} & \multicolumn{2}{|c||}{$%
\mathrm{\Theta }\left( \dfrac{1}{\log k}\right) $} \\ 
& \multicolumn{2}{c}{(Pr.\ref{p:sep>bun}(i)+Pr.\ref{p:ER}(iii))} & 
\multicolumn{2}{|c||}{(Pr.\ref{p:sep>bun}(ii)+Pr.\ref{p:ER}(iv))} \\ \hline
$\inf_{X}\dfrac{\text{\textsc{BRev}}(X)}{\text{\textsc{SRev}}(X)}=$ & $%
\dfrac{1}{2}$ & $\dfrac{2}{3}$ & $\dfrac{1}{k}$ & $\in \left[ \dfrac{1}{4}%
,0.57+\mathrm{o}(1)\right] $ \\ 
& (Pr.\ref{p:bun>sep}(i)+Ex.\ref{ex:1/k}) & (Le.\ref{l:2/3}+Ex.\ref{ex:2/3})
& (Pr.\ref{p:bun>sep}(i)+Ex.\ref{ex:1/k}) & (Pr.\ref{p:bun>sep}(ii)+Ex.\ref%
{ex:57}) \\ \hline\hline
\end{tabular}%
}

\begin{center}
\textbf{Table 2.} Summary of results for \textsc{SRev} vs. \textsc{BRev}
\end{center}

\end{document}